\documentclass[letterpaper, 10 pt, conference]{ieeeconf}

\IEEEoverridecommandlockouts      

\usepackage{amsthm}
\usepackage{color}

\overrideIEEEmargins                                      
\usepackage{cite}
\usepackage{amsmath,amssymb,amsfonts}
\usepackage{algorithm}
\usepackage{algorithmic}
\usepackage{graphicx}
\usepackage{subcaption}
\usepackage{textcomp}
\usepackage{array}
\usepackage{booktabs}
\usepackage[para,flushleft]{threeparttable}
\usepackage{lipsum}
\allowdisplaybreaks[4]

\usepackage{cite}
\usepackage{bm}
\usepackage{footmisc}
\usepackage{geometry}
\DeclareCaptionLabelFormat{custom}{#1(#2)}

\captionsetup[sub]{labelformat=simple}

\newtheorem{lem}{Lemma}
\newtheorem{thm}{Theorem}

\newtheorem{assum}{Assumption}
\newtheorem{definition}{Definition}

\newtheorem{rem}{Remark}
\newtheorem{cor}{Corollary}

\geometry{left=16.9mm, right=16.9mm, top=20.1mm, bottom=15.2mm}

\usepackage{lipsum}
\usepackage{booktabs}

\makeatletter

\makeatother

\title{\LARGE \bf
	Compression-based Privacy Preservation for Distributed Nash Equilibrium Seeking in Aggregative Games}
\author{Wei Huo, Xiaomeng Chen, Kemi Ding, Subhrakanti Dey, and Ling Shi
\thanks{W. Huo, X. Chen, and L. Shi are with the Department of Electronic and Computer Engineering, Hong Kong University of Science and Technology, Clear Water Bay, Kowloon, Hong Kong (email: whuoaa@connect.ust.hk, xchendu@connect.ust.hk, eesling@ust.hk).}
\thanks{K. Ding is with the School of System Design and Intelligent Manufacturing, Southern University of Science and Technology, Shenzhen, 518055, China (email: dingkm@sustech.edu.cn)}
\thanks{S. Dey is with the Department of Electrical Engineering, Uppsala
University, 75103 Uppsala, Sweden (email: subhrakanti.dey@angstrom.uu.se).}}
\begin{document}
\maketitle
\thispagestyle{empty}
\pagestyle{empty}

\begin{abstract}
 This paper explores distributed aggregative games in multi-agent systems. Current methods for finding distributed Nash equilibrium require players to send original messages to their neighbors, leading to communication burden and privacy issues. To jointly address these issues, we propose an algorithm that uses stochastic compression to save communication resources and conceal information through random errors induced by compression. Our theoretical analysis shows that the algorithm guarantees convergence accuracy, even with aggressive compression errors used to protect privacy. We prove that the algorithm achieves differential privacy through a stochastic quantization scheme. Simulation results for energy consumption games support the effectiveness of our approach.
\end{abstract}

\begin{keywords}
Distributed network; Information compression; Nash equilibrium; Differential privacy
\end{keywords}

\section{Introduction} \label{sec: introduction}
Aggregative games, which model competitive interactions among players, have seen a surge in interest for applications like network resource allocation~\cite{salehisadaghiani2016distributed} and energy management~\cite{wang2021distributed}. 
In these games, each player's objective function is influenced by all players' strategies, with a Nash equilibrium (NE) characterizing a stable solution where no player intends to unilaterally change its decision.
In decentralized networks lacking a central coordinator, despite players' competitive interests in games, they require specific communication protocols to share information with neighbors to address the absence of global information.

Despite progress in distributed NE seeking (DNES)~\cite{ye2017distributed, salehisadaghiani2016distributed,huang2022linearly, zhu2021networked}, 
privacy concerns and communication burden arise from traditional message broadcasting approaches. Directly transmitting sensitive data, such as power consumption patterns in energy management games, can compromise user privacy and security. Furthermore, communication bandwidth and power are always limited in practical distributed networks.
Thus, it is vital to develop privacy-preserving and communication-efficient algorithms that ensure convergence to NE.

Ensuring privacy in decentralized networks is a complex task. Encryption is commonly used but incurs significant computational overhead~\cite{lu2018privacy}.
The other approach is adding perturbation to achieve differential privacy (DP).
Ye et al.~\cite{ye2021differentially} and Lin et al.~\cite{lin2023statistical} utilized noise to obscures local aggregate estimates to preserve DP.
Chen and Shi~\cite{chen2023differentially} ensured DP by perturbing players' payoff functions using stochastic linear-quadratic functional perturbation.  
However, these methods introduce a trade-off between privacy and convergence accuracy in DNES.
Recently, Wang et al.~\cite{wang2022differentially} perturbed the gradient instead of transmitting data to guarantee almost sure convergence to NE and achieve DP per iteration.

Although the above works have explored privacy-preserving DNES algorithms, they require substantial data transmission during iterative communication with neighbors.
While some works have employed event-triggered mechanisms to reduce communication rounds~\cite{huo2024distributed}, players still transmit original messages if a certain event is triggered. 
Other studies have used compression techniques to reduce transmitted data size in games, including deterministic quantizations~\cite{chen2022distributed}, adaptive quantizations~\cite{pei2023distributed}, and general compressors~\cite{chen2022linear}. 
However, these approaches did not explicitly consider privacy preservation, and quantifying the privacy level arising from compression remains challenging.
Recently, Wang and Ba{\c{s}}ar~\cite{wang2022quantization} demonstrated that the quantization can be leveraged to guarantee DP for distributed optimization, inspiring for utilizing the inherent randomness of stochastic compression to achieve DP and reduce communication costs simultaneously.
Specifically,~\cite{wang2022quantization} combined the consensus and gradient descent to reach an agreement on the optimal solution in distributed optimization.
Nonetheless, the coupling among players' objective functions in aggregative games makes the convergence analysis more difficult.


Motivated by the above observations, this paper jointly considers privacy issues and communication efficiency in DNES.
Unlike previous works that handle these aspects in a cascade fashion~\cite{xie2023compressed},
we propose a novel DNES algorithm that directly utilizes the intrinsic randomness from stochastic compression to protect privacy. 
To ensure a strong privacy guarantee, the bound of the compression error variance does not vanish in our algorithm, which brings challenges for algorithm design and analysis. Without appropriate treatment for the compression errors, the algorithm will diverge due to the error accumulation.
Some works employ the dynamic scaling compression technique to tackle this challenge~\cite{yi2022communication, liao2023linearly}. Directly using this technique will cause exponential growth of the privacy loss per iteration and thus lose privacy.
Thus, instead of dynamically scaling the compressed value, we dedicatedly design the step sizes to reduce the effect of the non-vanishing compression errors and ensure convergence accuracy. Due to the space limitation, the detailed proof of this paper is provided in~\cite{appendix}.
In Table~\ref{tab: compare}, we compare our work with some related works.

\begin{table}[htbp] 
    \centering
    \footnotesize
    \captionsetup{font=small}
    \caption{\small{Comparison with some related works}}
    \begin{threeparttable}
    \begin{tabular}[c c c c]{|m{0.15\linewidth}<{\centering}|m{0.2\linewidth}<{\centering}|m{0.18\linewidth}<{\centering} |m{0.2\linewidth}<{\centering}|}
        \hline
        Work\tnote{$\sharp$} & Communication Reduction & Privacy Preservation & Convergence Accuracy\tnote{$\star$} \\
        \hline
        \cite{ye2021differentially, lin2023statistical, chen2022distributed} &  $\times$ & $(\varepsilon, 0)$-DP & $\omega$-NE \\
        \hline
        \cite{yi2022communication, chen2022linear, liao2023linearly} & $\checkmark$ & $\times$  & Exact NE \\
        \hline
        \cite{xie2023compressed} & $\checkmark$ & $(\varepsilon, 0)$-DP & $\omega$-NE \\
        \hline
        Our work & $\checkmark$ & $(0, \delta)$-DP & Exact NE \\
        \hline
    \end{tabular}
    \begin{tablenotes}
     \item[$\sharp$] \cite{yi2022communication, liao2023linearly, xie2023compressed} study algorithms in distributed optimization instead of DNES. \\
        \item[$\star$] The convergence accuracy is stated in a mean square sense. The $\omega$-NE represents the asymptotically converged point with a mean square distance of $\omega$ to the NE, where $\omega > 0$. The exact NE corresponds to the $0$-NE.
    \end{tablenotes}
    \end{threeparttable}
    \label{tab: compare}
\end{table}

Our main contributions are as follows:
\begin{itemize}
	\item[1)]
 We propose a novel Compression-based Privacy-preserving DNES (CP-DNES) algorithm (\textbf{Algorithm~\ref{algo: one}}). 
 CP-DNES encodes the messages with fewer bits and masks information by intrinsic random compression errors.
	\item[2)] 
 By developing precise step size conditions, we demonstrate that CP-DNES converges to the accurate NE in the mean square sense, even in the presence of the non-vanishing compression errors (\textbf{Theorem~\ref{thm: convergence}}). 
	\item[3)] 
CP-DNES, when equipped with a specific stochastic compressor, achieves $(0,\delta)$-DP (\textbf{Theorem~\ref{thm: DP}}), surpassing the commonly used $(\epsilon, \delta)$-DP. This result sheds light on simultaneously attaining $(0,\delta)$-DP and convergence accuracy in distributed aggregative games. 
\end{itemize}


\emph{Notations}: 
Let $\mathbb{R}^{p}$ and $\mathbb{R}^{p \times q}$ denote the set of $p$-dimensional vectors and $p \times q$-dimensional matrices, respectively, and $\mathbb{N}_{+}$ represents the set of positive integers.
For $\mathcal{N} \triangleq \{1, 2, \dots, N \}$, $\text{col}(x_{i})_{i \in \mathcal{N}}$ refers to the stacked vector $[x_{1}^{\top}, \dots, x_{N}^{\top}]^{\top}$.
The notations $\mathbf{1}_{p} \in \mathbb{R}^{p}$ denotes a vector with all elements equal to one, and $I_{p} \in \mathbb{R}^{p \times p}$ represents a $p \times p$-dimensional identity matrix.
Denote by $\prod_{i=1}^{N}\mathcal{X}_{i}$ the Cartesian product of the set $\{ \mathcal{X}_{i} \}_{i=1, \dots, N}$.
\
For a closed and convex set $\mathcal{X} \subseteq \mathbb{R}^{n}$, the projection operator $\mathbf{P}_{\mathcal{X}}(\cdot): \mathbb{R}^{n} \to \mathcal{X}$ is defined as $\mathbf{P}_{\mathcal{X}}(v) = \arg\min_{z \in \mathcal{X}} \|v-z \|$.
The operator $\| \cdot \|$ is the induced-$2$ norm for matrices and the Euclidean norm for vectors. 
We use $\| \cdot \|_{1}$ to denote the $\ell_{1}$-norm of a vector $x = \text{col}(x_{i})_{i=1}^{p} \in \mathbb{R}^{p}$, and $\| x \|_{1} = \sum_{i=1}^{p} |x_{i}| $.
We use $\mathbb{P}(\mathcal{A})$ to represent the probability of an event $\mathcal{A}$, and $\mathbb{E}[x]$ to be the expected value of a random variable $x$.
For any two matrices, $A \in \mathbb{R}^{n \times m}$, $B \in \mathbb{R}^{p \times q}$, $A \otimes B \in \mathbb{R}^{np \times mq}$ is the Kronecker product of $A$ and $B$.

Let an undirected graph $\mathcal{G} = (\mathcal{N}, \mathcal{E})$ describe the information exchange among a set of $N$ players, denoted by $\mathcal{N} = \{1, 2, \dots, N \}$. The edge set $\mathcal{E} \subset \mathcal{N} \times \mathcal{N}$ denotes the communication links. The weight matrix $W = [w_{ij}]$ represents the structure of interactions in $\mathcal{G}$. If player $i$ can receive messages from player $j$, then $w_{ij}>0$. Otherwise, $w_{ij} = 0$.
We define the neighbor set of player $i$ as $\mathcal{N}_{i} = \{j| w_{ij}>0\}$, and the degree matrix $D$ is a diagonal matrix with the $i$-th element $d_{ii} = \sum_{j \in \mathcal{N}_{i}}w_{ij}$. 
Therefore, the Laplacian matrix is $L = D-W$.

\section{Preliminaries and Problem Statement} \label{sec: formulation}

\subsection{Attacker and DP}
We consider a prevalent eavesdropping attack model in privacy~\cite{ye2021differentially, wang2022differentially}, where attackers monitor all communication channels to intercept transmitted messages and learn sensitive information from the sending players.

DP quantifies the privacy level of involved individuals in a statistical database. 
We provide the following definitions for DP in distributed aggregative games.
\begin{definition} \label{defn: adjacency} 
	(Adjacency~\cite{ye2021differentially}) Two objective function sets $\mathcal{F} = \{f_{i} \}_{i \in \mathcal{N}}$ and $\mathcal{F}^{\prime} = \{f_{i}^{\prime} \}_{i \in \mathcal{N}}$ are adjacent if there exists some $i_{0} \in \mathcal{N}$ such that $f_{i} = f_{i}^{\prime}, \forall i \neq i_{0}$ and $f_{i_{0}} \neq f_{i_{0}}^{\prime}$.
\end{definition}

\begin{definition} \label{defn: dp} (Differential Privacy~\cite{dwork2006calibrating}) 
For a randomized mechanism $\mathcal{M}$, it preserves $(\epsilon, \delta)$-DP of objective functions if, for any pair of adjacent function sets $\mathcal{F}$ and $\mathcal{F}^{\prime}$ and all subsets $\mathcal{O}$ of the image set of the mechanism $\mathcal{M}$, the following condition holds:
\begin{equation} \label{eq: dp_def}
	\mathbb{P}[\mathcal{M}(\mathcal{F}) \in \mathcal{O}] \leq e^{\epsilon} \mathbb{P}[\mathcal{M}(\mathcal{F}^{\prime}) \in \mathcal{O}] + \delta,
\end{equation}
where $\epsilon \geq 0$ and $\delta \geq 0$.
\end{definition}

Definition~\ref{defn: dp} states that changing the cost function of a single player leads to a small change in the distribution of outputs.
The factor $\epsilon$ in~\eqref{eq: dp_def} denotes the privacy upper bound to measure an algorithm $\mathcal{M}$, and $\delta$ denotes the probability of breaking this bound. Therefore, a smaller $\epsilon$ and $\delta$ indicate a higher level of privacy.

\subsection{Problem Formulation}
Consider a network of $N$ players solving a distributed aggregative game. 
Let $x_{i} \in \mathcal{X}_{i}$ denote the decision of player $i$ and $\mathcal{X}_{i} \subset \mathbb{R}^{n}$ is a bounded closed convex set.
The player $i$'s cost function $f_{i}(x_{i}, h(\mathbf{x})): \mathcal{X}  \to \mathbb{R}$ depends on its decision $x_{i}$ and an aggregate function of all players' decisions $h(\textbf{x}) = \frac{1}{N} \sum_{i=1}^{N}x_{i}$, where $\mathcal{X} = \prod_{i \in \mathcal{N}}\mathcal{X}_{i}$.
The following optimization problem represents the objective of each player in the game:
\begin{equation} \label{eq: problem}
\begin{aligned}
	\min_{x_{i} \in \mathcal{X}_{i}} 
	&\ f_{i}(x_{i}, h(\mathbf{x})) = f_{i}\left(x_{i}, \frac{1}{N}x_{i} + h(\mathbf{x}_{-i})\right), \ \forall i \in \mathcal{N},
\end{aligned}
\end{equation}
where $h(\mathbf{x}_{-i}) = \frac{1}{N} \sum_{j \in \mathcal{N}, j \neq i} x_{j}$ is the aggregative value of player $i$'s opponents.
The optimal solution to~\eqref{eq: problem} is the NE.
\begin{definition}
	A Nash equilibrium (NE) is a decision profile $\mathbf{x}^{*} = \text{col}(x_{i}^{*})_{i \in \mathcal{N}} \in \mathcal{X} $ satisfying the following condition for each player $i \in \mathcal{N}$: $f_{i}(x_{i}^{*}, \frac{1}{N} x_{i}^{*} + h(\mathbf{x}_{-i}^{*})) \leq f_{i}(x_{i}, \frac{1}{N}x_{i} + h(\mathbf{x}_{-i}^{*}))$.
\end{definition}

To ensure that players can find the NE through interactions, we adopt the following assumption: 
\begin{assum} \label{assum: graph}
	The communication network is an undirected and connected graph.
\end{assum}

The following assumptions are posed on each objective function $f_{i}, \forall i \in \mathcal{N}$.
\begin{assum} \label{assum: existence}
The function $f_{i}(x_{i}, h(\mathbf{x}))$ is convex in $x_{i}$ for every fixed $x_{-i} \in \prod_{j\in \mathcal{N}, j\neq i} \mathcal{X}_{j}$ and $f_{i}(x_{i},v)$ is continuously differentiable in $(x_{i}, v) \in \mathcal{X}_{i} \times \mathbb{R}^{n}$.
\end{assum}
Under Assumption~\ref{assum: existence}, we define the following for player $i$:
\begin{subequations}
	\begin{align}
		\label{eq: g} g_{i}(x_{i},z_{i}) \triangleq & \left( \nabla_{x_{i}}  f(x_{i}, v) + \frac{1}{N} \nabla_{v} \partial f(x_{i}, v) \right) \bigg|_{v = z_{i}}, \\
		\phi_{i}(\mathbf{x}) \triangleq & \nabla_{x_{i}} f_{i}(x_{i}, h(\mathbf{x})). 
	\end{align}
\end{subequations}
Denote $\mathbf{z} = \text{col}(z_{i})_{i \in \mathcal{N}}$, $G(\mathbf{x}, \mathbf{z}) = \text{col}(g_{i}(x_{i}, z_{i}))_{i \in \mathcal{N}}$ and $\Phi(\mathbf{x}) = \text{col}( \phi_{i}(\mathbf{x}) )_{i \in \mathcal{N}}$. It can be easily inferred that $\phi_{i}(\mathbf{x}) = g_{i}(x_{i}, h(\mathbf{x}))$ and $\Phi(\mathbf{x}) = G(\mathbf{x}, \mathbf{1}_{N}\otimes I_{n} h(\mathbf{x}) )$.

\begin{assum} \label{assum: unique} 
The gradient $\phi_{i}(\mathbf{x})$ is $L_{\phi}$-Lipschitz continuous, i.e., $\| \phi_{i}(\mathbf{x}) - \phi_{i}(\mathbf{y})  \| \leq L_{\phi} \left\| \mathbf{x} - \mathbf{y} \right\|$ for $\mathbf{x}, \mathbf{y} \in \mathbb{R}^{n}$. Moreover, there exists $m>0$ such that for $\mathbf{x} \in \mathbb{R}^{Nn}$, $(\mathbf{x} - \mathbf{x}^{*})^{\top}(\Phi(\mathbf{x}) - \Phi(\mathbf{x}^{*})) \geq m \left\| \mathbf{x} - \mathbf{x}^{*} \right\|^{2}$ holds.
\end{assum}

Assumptions~\ref{assum: existence} and~\ref{assum: unique} are standard for guaranteeing the existence and uniqueness of the NE in problem~\eqref{eq: problem}~\cite{koshal2016distributed, huang2022linearly, zhu2021networked}, which is crucial for designing DNES. If either of these assumptions is not satisfied, the existence and uniqueness of the NE need to be rigorously analyzed~\cite{wang2022non}. However, this is outside the scope of this paper since we mainly focus on the algorithm design here.


\begin{assum} \label{assum: Lip}
	For any $x_{i} \in \mathcal{X}_{i}$, $g_{i}(x_{i}, z)$ is $L_{g}$-Lipschitz in $z$, i.e., $\| g_{i}(x_{i}, z_{1}) - g_{i}(x_{i}, z_{2})  \| \leq L_{g} \| z_{1} - z_{2} \|, \ \forall z_{1}, z_{2} \in \mathbb{R}^{n}$.
\end{assum}

Assumption~\ref{assum: Lip} is commonly used in literature on aggregative games~\cite{koshal2016distributed, huang2022linearly, zhu2021networked}, 
crucial for convergence analysis and encompassing numerous prevalent game models, such as auction-based games~\cite{salehisadaghiani2016distributed} and Cournot oligopoly games~\cite{wang2021distributed}. Moreover, it should be noted that $L_{\phi}$, $m$, and $L_{g}$ in Assumptions~\ref{assum: existence}--\ref{assum: Lip} are solely employed for theoretical analysis. In the algorithm implementation, players do not need precise value of these parameters.

\begin{assum} \label{assum: bounded_gradient}
 There exists a positive constant $C$ such that there is $\| g_{i} \| \leq C$ and $\| \phi_{i} \| \leq C$, $\forall i \in \mathcal{N}$. 
\end{assum}

Assumption~\ref{assum: bounded_gradient} holds for various game models~\cite{salehisadaghiani2016distributed, wang2021distributed} since we consider a bounded constraint set $\mathcal{X}$. In practice, gradient clipping is commonly used to ensure a bounded gradient.


We aim to design a DNES algorithm that preserves the $(\epsilon, \delta)$-DP of objective functions and converges to the exact NE of~\eqref{eq: problem} in a mean square sense as shown in Definition~\ref{defn: mean}.
\begin{definition} \label{defn: mean}
    With a DNES algorithm, denote the players' action profile at iteration $k$ as $\mathbf{x}_{k}$. The DNES algorithm converges to the exact NE, $\mathbf{x}^{*}$, in a mean square if
    \begin{equation}
        \lim_{k \to \infty} \mathbb{E} \left[\| \mathbf{x}_{k} - \mathbf{x}^{*} \|^{2} \right] =0.
    \end{equation}
\end{definition}

\section{Algorithm Development} \label{sec: algorithm}
Conventional DNES for aggregative games typically have the following form~\cite{ye2017distributed}:
\begin{subequations} \label{eq: conventional}
\begin{align}
	x_{i,k+1} &= \mathbf{P}_{\mathcal{X}_{i}}[ x_{i,k} - \eta g_{i}(x_{i,k},y_{i,k})], \\
 \label{eq: dynamic_consensus}	y_{i,k+1} &= y_{i,k} + \sum_{j \in \mathcal{N}_{i}} w_{i,j}(y_{j,k} - y_{i,k}) + x_{i,k+1} - x_{i,k},
\end{align}
\end{subequations}
where $y_{i,k}$ is the estimate of the current aggregative variable $h(\mathbf{x}_{k}) = \frac{1}{N}\sum_{i=1}^{N}x_{i,k}$
by player $i$, and $\eta > 0$ is the step size to optimize the strategy.
To accurately estimate the unknown 
$h(\mathbf{x}_{k})$, player $i$ should broadcast its estimate $y_{i,k}$ to its neighbors and employ dynamic average consensus as shown in~\eqref{eq: dynamic_consensus} to track the aggregative term.
With the public knowledge of $W$ and the step size $\eta$, an attacker can calculate $x_{i,k+1} - x_{i,k}$ and consequently obtain $g_{i}(x_{i,k}, y_{i,k})$ if $x_{i,k} - \eta g_{i}(x_{i,k}, y_{i,k}) \in \mathcal{X}_{i}$.

Based on this observation, we leverage stochastic compression to preserve privacy:
\begin{subequations} \label{eq: algorithm}
	\begin{align}
	\label{eq: x_update} x_{i,k+1} =& \mathbf{P}_{\mathcal{X}_{i}}[ x_{i,k} - \alpha_{k} \beta_{k} g_{i}(x_{i,k},y_{i,k})], \\
	\label{eq: y_update} y_{i,k+1} =& y_{i,k} + \beta_{k} \sum_{j \in \mathcal{N}_{i}} w_{i,j}(\mathcal{C}(y_{j,k}) - \mathcal{C}(y_{i,k})) \nonumber \\
	& \quad \quad + x_{i,k+1} - x_{i,k},
\end{align}
\end{subequations}
where $y_{i,0} = x_{i,0}$, $\mathcal{C}(\cdot)$ is the compression operator, $\alpha_{k}, \beta_{k} >0$ are designed step sizes. We consider stochastic compression schemes that satisfy the following assumption. An example of the compression scheme will be introduced in Section~\ref{sec: DP}.	
\begin{assum} \label{assum: compressor}
For some constant $\sigma$ and any $x \in \mathbb{R}^{n}$, 
  $\mathbb{E}[\mathcal{C}(x)|x ] = x$ and $\mathbb{E}[ \| \mathcal{C}(x) - x \|^{2} ] \leq \sigma^{2}$. The randomized mechanism in each player's compression is statistically independent.
\end{assum}

In Algorithm~\ref{algo: one}, each player shares their compressed estimate $\mathcal{C}(y_{i,k})$ to its neighbors. 
Owing to the independent random compression error, an attacker cannot deduce the exact gradient of player $i$, even with the knowledge of the compression scheme. Thus privacy is preserved.
\setlength{\lineskip}{0pt}
\begin{figure}[t]
  \begin{algorithm}[H]
	\caption{
	CP-DNES}
	\begin{algorithmic}[1] \label{algo: one}
	\renewcommand{\algorithmicrequire}{\textbf{Input:}}
    \renewcommand{\algorithmicensure}{\textbf{Initialize:}}
    \REQUIRE Public information $W$, $\alpha_{k}$, $\beta_{k}$, the total number of iterations $T$
    \ENSURE  $x_{i,0} \in \mathbb{R}^{n} $, $y_{i,0} = x_{i,0}$.
	\FOR {$k = 0, 1, 2,  \dots$}
	\STATE for each $i \in \mathcal{N}$,
	\STATE \quad Compute local gradient $g_{i}(x_{i,k}, y_{i,k})$ using~\eqref{eq: g}. \\
	\STATE \quad Determine compressed estimate $\mathcal{C}(y_{i,k})$ and send it to its neighbors.\\
	\STATE \quad Receive $\mathcal{C}(y_{j,k})$ from $j \in \mathcal{N}$ and update its decision and local estimate on aggregative value using~\eqref{eq: x_update} and~\eqref{eq: y_update}, respectively.
	\ENDFOR
	\end{algorithmic}
  \end{algorithm}
\end{figure}

Let $\mathbf{x}_{k} = \text{col}(x_{i,k})_{i \in \mathcal{N}}$, $\mathbf{y}_{k} = \text{col}(y_{i,k})_{i \in \mathcal{N}}$. 
Then, we can express~\eqref{eq: algorithm} in a compact form:
\begin{subequations}
\begin{align}
	\label{eq: x_compact} \mathbf{x}_{k+1} =& \mathbf{P}_{\mathcal{X}}[ \mathbf{x}_{k} - \alpha_{k}\beta_{k}G(\mathbf{x}_{k}, \mathbf{y}_{k})], \\
	\label{eq: y_compact} \mathbf{y}_{k+1} =& (A_{k} \otimes I_{n} ) \mathbf{y}_{k} + \mathbf{x}_{k+1} - \mathbf{x}_{k} - \beta_{k}(L \otimes I_{n} )\mathbf{e}_{k},
\end{align}
\end{subequations}
where $A_{k} = I - \beta_{k}L$, $\mathbf{e}_{k} = \text{col}(e_{i,k})_{i\in \mathcal{N}}$, and $e_{i,k} = \mathcal{C}(y_{i,k}) - y_{i,k}$ is the compression error. 
It can be inferred that $A_{k}$ is doubly stochastic.

The dynamics of $\mathbf{y}_{k}$ in~\eqref{eq: y_compact} imply that the average estimate by players, $\bar{\mathbf{y}}_{k} = \frac{1}{N}(\mathbf{1}_{N}^{\top} \otimes I_{n} )\mathbf{y}_{k}$, 
evolves according to the dynamics $\bar{\mathbf{y}}_{k+1} = \bar{\mathbf{y}}_{k} + \bar{\mathbf{x}}_{k+1} - \bar{\mathbf{x}}_{k}$,
where $\bar{\mathbf{x}}_{k} = \frac{1}{N}(\mathbf{1}_{N}^{\top} \otimes I_{n}) \mathbf{x}_{k} = h(\mathbf{x}_{k})$. 
Therefore, we have $\mathbf{1}\bar{\mathbf{y}}_{k+1} - \mathbf{1}\bar{\mathbf{x}}_{k+1} = \mathbf{1}\bar{\mathbf{y}}_{k} - \mathbf{1}\bar{\mathbf{x}}_{k} = \cdots =  \mathbf{1}\bar{\mathbf{y}}_{0} - \mathbf{1}\bar{\mathbf{x}}_{0}=0$, i.e., 
\begin{equation} \label{eq: y_bar}
	\bar{\mathbf{y}}_{k} = \bar{\mathbf{x}}_{k} = h(\mathbf{x}_{k}).
\end{equation}
 This shows that the evolution of $\bar{\mathbf{y}}_{k}$ is immune to the quantization error $\mathbf{e}_{k}$, and thus $\mathbf{y}_{k}$ tracks the aggregative value $h(\mathbf{x}_{k})$, which helps to ensure convergence accuracy.

\begin{rem} 
To preserve privacy in conventional DNES~\eqref{eq: conventional}, we introduce random perturbation via stochastic compression schemes. However, quantifying the privacy level arising from compression and ensuring convergence without sacrificing accuracy pose new challenges unaddressed by existing DNES algorithms~\cite{ye2021differentially, lin2023statistical, chen2023differentially, chen2022distributed, pei2023distributed, chen2022linear}.
\end{rem}
\begin{rem}
Under Assumption~\ref{assum: compressor}, the compression error of each player is independently and identically distributed (i.i.d), which avoids leakage of the exact original value, like the i.i.d noise perturbation approach, thereby preserving message privacy.
Some works adopt Laplacian or Gaussian noise with decreasing variance to protect the privacy~\cite{ye2021differentially, lin2023statistical}. 
	However, this approach poses a high risk of privacy leakage when approaching convergence. 
	In contrast, the variance of compression error in Algorithm~\ref{algo: one} does not decrease over time. 
	However, the non-decreasing compression error introduces convergence challenges. To tackle these, we meticulously design the step sizes to guarantee convergence.
\end{rem}



\section{Convergence Analysis} \label{sec: convergence}
In this section, we demonstrate that CP-DNES ensures convergence of all players' decisions to $\mathbf{x}^{*}$ under certain conditions. 
For simplicity, we assume $n=1$ to simplify the convergence analysis, and it can be easily extended to the case when $n>1$.
 

\begin{lem} \label{lem: y_bounded}
Suppose Assumptions~\ref{assum: graph}--\ref{assum: compressor} hold, then 
	$ \sum_{k=0}^{\infty} \beta_{k} \mathbb{E} [\| \mathbf{y}_{k} - \mathbf{1} \bar{\mathbf{x}}_{k}\|^{2}] < \infty $ 
	if $\sum_{k=0}^{\infty} \alpha_{k}^{2} \beta_{k} < \infty$, and $\sum_{k=0}^{\infty} \beta_{k}^{2} < \infty$.
\end{lem}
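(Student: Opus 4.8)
The plan is to establish a Lyapunov-type recursion for the consensus error $\mathbf{y}_k - \mathbf{1}\bar{\mathbf{x}}_k$ and then sum it over $k$. First I would project the dynamics~\eqref{eq: y_compact} onto the disagreement subspace: let $M = I - \frac{1}{N}\mathbf{1}\mathbf{1}^{\top}$ be the averaging-complement projector, and write $\mathbf{r}_k = M\mathbf{y}_k = \mathbf{y}_k - \mathbf{1}\bar{\mathbf{y}}_k = \mathbf{y}_k - \mathbf{1}\bar{\mathbf{x}}_k$ using~\eqref{eq: y_bar}. Applying $M$ to~\eqref{eq: y_compact} and using that $M$ commutes with $A_k = I - \beta_k L$ (since $L\mathbf{1}=0$ and $\mathbf{1}^{\top}L = 0$) gives $\mathbf{r}_{k+1} = (A_k - \tfrac{1}{N}\mathbf{1}\mathbf{1}^{\top})\mathbf{r}_k + M(\mathbf{x}_{k+1}-\mathbf{x}_k) - \beta_k M L \mathbf{e}_k$. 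The spectral norm of $A_k - \tfrac{1}{N}\mathbf{1}\mathbf{1}^{\top}$ restricted to the disagreement subspace is $1 - \beta_k \rho$ for some $\rho>0$ depending on the nonzero eigenvalues of $L$ (valid once $\beta_k$ is small enough, which holds eventually since $\beta_k\to 0$ from $\sum\beta_k^2<\infty$).

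Next I would take squared norms and expectations. Using $\|a+b+c\|^2 \le (1+\beta_k)\|a\|^2 + c_1(1+\tfrac{1}{\beta_k})(\|b\|^2+\|c\|^2)$ type Young's inequalities tuned to the contraction factor, I would get
\begin{align*}
\mathbb{E}\|\mathbf{r}_{k+1}\|^2 &\le (1-\beta_k\rho)\,\mathbb{E}\|\mathbf{r}_k\|^2 + \frac{c_2}{\beta_k}\,\mathbb{E}\|\mathbf{x}_{k+1}-\mathbf{x}_k\|^2 + c_3\beta_k\,\mathbb{E}\|\mathbf{e}_k\|^2,
\end{align*}
where I absorbed the cross terms; the martingale structure of $\mathbf{e}_k$ (Assumption~\ref{assum: compressor} gives $\mathbb{E}[\mathbf{e}_k|\mathcal{F}_k]=0$, $\mathbb{E}\|\mathbf{e}_k\|^2 \le N\sigma^2$) kills the cross term between $\mathbf{e}_k$ and the $\mathcal{F}_k$-measurable part. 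For the increment term, from~\eqref{eq: x_compact} and nonexpansiveness of the projection, $\|\mathbf{x}_{k+1}-\mathbf{x}_k\| \le \alpha_k\beta_k\|G(\mathbf{x}_k,\mathbf{y}_k)\| \le \alpha_k\beta_k\sqrt{N}C$ by Assumption~\ref{assum: bounded_gradient}, so $\tfrac{1}{\beta_k}\mathbb{E}\|\mathbf{x}_{k+1}-\mathbf{x}_k\|^2 \le N C^2 \alpha_k^2\beta_k$. Thus
\begin{align*}
\mathbb{E}\|\mathbf{r}_{k+1}\|^2 \le (1-\beta_k\rho)\,\mathbb{E}\|\mathbf{r}_k\|^2 + c_4\,\alpha_k^2\beta_k + c_5\,\beta_k^2.
\end{align*}

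Finally I would turn this into the summability claim. Rearranging, $\beta_k\rho\,\mathbb{E}\|\mathbf{r}_k\|^2 \le \mathbb{E}\|\mathbf{r}_k\|^2 - \mathbb{E}\|\mathbf{r}_{k+1}\|^2 + c_4\alpha_k^2\beta_k + c_5\beta_k^2$. Summing from $k=0$ to $\infty$, the telescoping part is bounded by $\mathbb{E}\|\mathbf{r}_0\|^2 < \infty$ (note $\mathbf{r}_0 = M\mathbf{y}_0 = M\mathbf{x}_0$ is a fixed finite vector), and the remaining two series converge by the hypotheses $\sum_k \alpha_k^2\beta_k<\infty$ and $\sum_k \beta_k^2<\infty$. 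Hence $\sum_k \beta_k\,\mathbb{E}\|\mathbf{r}_k\|^2 = \sum_k \beta_k\,\mathbb{E}\|\mathbf{y}_k-\mathbf{1}\bar{\mathbf{x}}_k\|^2 < \infty$, as desired. The main obstacle I anticipate is the bookkeeping in the Young's-inequality step: the weights on the cross terms must be chosen so that the $\|\mathbf{r}_k\|^2$ coefficient stays of the form $1-\Theta(\beta_k)$ rather than $1-\Theta(\beta_k)+\Theta(\beta_k^2)$ with the wrong sign, and one must check that for large $k$ (small $\beta_k$) this coefficient is genuinely in $(0,1)$; handling the initial finitely many $k$ where $\beta_k$ may not be small enough is routine since each contributes a finite amount. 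A secondary technical point is justifying the exchange of expectation and summation, which follows from monotone convergence since all terms are nonnegative.
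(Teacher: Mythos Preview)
Your argument is correct and follows essentially the same route as the paper: project onto the disagreement subspace via $M=I-\tfrac{1}{N}\mathbf{1}\mathbf{1}^{\top}$, use $\bar{\mathbf{y}}_k=\bar{\mathbf{x}}_k$ to identify $\mathbf{r}_k=\mathbf{y}_k-\mathbf{1}\bar{\mathbf{x}}_k$, exploit the zero-mean property of $\mathbf{e}_k$ to kill the cross term, apply Young's inequality with weight proportional to $\beta_k$, and use $\|\mathbf{x}_{k+1}-\mathbf{x}_k\|\le \alpha_k\beta_k\sqrt{N}C$ to control the increment. Two small differences are worth noting. First, where you telescope the inequality directly, the paper packages the same recursion and invokes the Robbins--Siegmund supermartingale lemma (their Lemma~\ref{lem: pre_con}) to conclude $\sum_k\beta_k\mathbb{E}\|\mathbf{r}_k\|^2<\infty$; the content is identical. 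Second, the paper's write-up bounds $\|\mathbf{y}_k-\mathbf{1}\bar{\mathbf{x}}_k\|^2\le\|\mathbf{y}_k\|^2$ and then works with $\|\mathbf{y}_k\|^2$ itself, using $\|A_k\mathbf{y}_k\|\le(1-\beta_k\lambda_2)\|\mathbf{y}_k\|$; strictly speaking this contraction only holds on $\mathrm{range}(M)$ since $A_k\mathbf{1}=\mathbf{1}$, so your explicit projection onto the disagreement subspace is the cleaner formulation of what the paper intends.
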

\begin{proof}
	The proof is provided in Appendix~\ref{app: lem1}.
\end{proof}

Lemma~\ref{lem: y_bounded} indicates that if the step sizes satisfy certain conditions, the cumulative disagreement of the estimates on the actual estimate, $\bar{\mathbf{x}}_{k}$, will not diverge over time.
Based on this preliminary result, we can prove the convergence of CP-DNES. 
\begin{thm} \label{thm: convergence}
Under Assumptions~\ref{assum: graph}--\ref{assum: compressor}, if the step size sequence satisfies
\begin{equation} \label{eq: step}
	\sum_{k=0}^{\infty} \alpha_{k} \beta_{k} = \infty, \ \sum_{k=0}^{\infty} \alpha_{k}^{2} \beta_{k} < \infty, \ \sum_{k=0}^{\infty} \beta_{k}^{2} < \infty,
\end{equation}
then CP-DNES guarantees that the sequence $\{\mathbf{x}_{k}\}$ will converge to $\mathbf{x}^{*}$ in a mean square sense, i.e., 
	\begin{equation}
		 \lim_{k \to \infty} \mathbb{E} \left[\| \mathbf{x}_{k} - \mathbf{x}^{*} \|^{2} \right] =0.
	\end{equation}
\end{thm}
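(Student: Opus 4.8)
The plan is to combine a standard projected-gradient contraction argument for the action update~\eqref{eq: x_compact} with the tracking-error bound from Lemma~\ref{lem: y_bounded}, using the Robbins--Siegmund almost-supermartingale lemma to close the recursion. First I would write $\mathbf{x}^{*} = \mathbf{P}_{\mathcal{X}}[\mathbf{x}^{*} - \alpha_{k}\beta_{k}\Phi(\mathbf{x}^{*})]$, which holds because $\mathbf{x}^{*}$ is the NE, and use nonexpansiveness of the projection to get
\begin{align}
\|\mathbf{x}_{k+1} - \mathbf{x}^{*}\|^{2} &\leq \|\mathbf{x}_{k} - \mathbf{x}^{*} - \alpha_{k}\beta_{k}(G(\mathbf{x}_{k},\mathbf{y}_{k}) - \Phi(\mathbf{x}^{*}))\|^{2} \nonumber \\
&= \|\mathbf{x}_{k} - \mathbf{x}^{*}\|^{2} - 2\alpha_{k}\beta_{k}(\mathbf{x}_{k}-\mathbf{x}^{*})^{\top}(G(\mathbf{x}_{k},\mathbf{y}_{k}) - \Phi(\mathbf{x}^{*})) + \alpha_{k}^{2}\beta_{k}^{2}\|G(\mathbf{x}_{k},\mathbf{y}_{k}) - \Phi(\mathbf{x}^{*})\|^{2}. \nonumber
\end{align}
The last term is bounded by $4C^{2}\alpha_{k}^{2}\beta_{k}^{2}$ via Assumption~\ref{assum: bounded_gradient}. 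For the cross term I would split $G(\mathbf{x}_{k},\mathbf{y}_{k}) - \Phi(\mathbf{x}^{*}) = (\Phi(\mathbf{x}_{k}) - \Phi(\mathbf{x}^{*})) + (G(\mathbf{x}_{k},\mathbf{y}_{k}) - G(\mathbf{x}_{k},\mathbf{1}\bar{\mathbf{x}}_{k}))$, since $\Phi(\mathbf{x}_{k}) = G(\mathbf{x}_{k},\mathbf{1}\bar{\mathbf{x}}_{k})$ by~\eqref{eq: y_bar}. Strong monotonicity (Assumption~\ref{assum: unique}) gives $(\mathbf{x}_{k}-\mathbf{x}^{*})^{\top}(\Phi(\mathbf{x}_{k})-\Phi(\mathbf{x}^{*})) \geq m\|\mathbf{x}_{k}-\mathbf{x}^{*}\|^{2}$, while the second piece is controlled by $L_{g}\|\mathbf{y}_{k} - \mathbf{1}\bar{\mathbf{x}}_{k}\|$ (Assumption~\ref{assum: Lip}) and then Young's inequality splits it: $2\alpha_{k}\beta_{k}\|\mathbf{x}_{k}-\mathbf{x}^{*}\| L_{g}\|\mathbf{y}_{k}-\mathbf{1}\bar{\mathbf{x}}_{k}\| \leq m\alpha_{k}\beta_{k}\|\mathbf{x}_{k}-\mathbf{x}^{*}\|^{2} + (L_{g}^{2}/m)\alpha_{k}\beta_{k}\|\mathbf{y}_{k}-\mathbf{1}\bar{\mathbf{x}}_{k}\|^{2}$.

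Taking conditional expectation (the compressor is unbiased, Assumption~\ref{assum: compressor}, so no extra bias enters the action update which does not use $\mathcal{C}$ directly) and then full expectation yields a recursion of the form
\begin{equation}
\mathbb{E}[\|\mathbf{x}_{k+1}-\mathbf{x}^{*}\|^{2}] \leq (1 - m\alpha_{k}\beta_{k})\mathbb{E}[\|\mathbf{x}_{k}-\mathbf{x}^{*}\|^{2}] + \frac{L_{g}^{2}}{m}\alpha_{k}\beta_{k}\mathbb{E}[\|\mathbf{y}_{k}-\mathbf{1}\bar{\mathbf{x}}_{k}\|^{2}] + 4C^{2}\alpha_{k}^{2}\beta_{k}^{2}. \nonumber
\end{equation}
Now the conditions~\eqref{eq: step} are exactly what is needed: $\sum\alpha_{k}^{2}\beta_{k} < \infty$ and $\sum\beta_{k}^{2} < \infty$ are the hypotheses of Lemma~\ref{lem: y_bounded}, which gives $\sum_{k}\beta_{k}\mathbb{E}[\|\mathbf{y}_{k}-\mathbf{1}\bar{\mathbf{x}}_{k}\|^{2}] < \infty$; since $\alpha_{k}$ is bounded (or at least $\alpha_{k}\beta_{k} \leq \text{const}\cdot\beta_{k}$ after finitely many terms), the perturbation term $\sum_{k}\frac{L_{g}^{2}}{m}\alpha_{k}\beta_{k}\mathbb{E}[\|\mathbf{y}_{k}-\mathbf{1}\bar{\mathbf{x}}_{k}\|^{2}]$ is summable, and $\sum_{k}4C^{2}\alpha_{k}^{2}\beta_{k}^{2} < \infty$ as well. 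Meanwhile $\sum\alpha_{k}\beta_{k} = \infty$ ensures the contraction factor drives the iterate to zero. I would then invoke the Robbins--Siegmund lemma (deterministic version suffices here since we have already taken expectations): if $v_{k+1} \leq (1-a_{k})v_{k} + b_{k}$ with $a_{k}\in[0,1]$, $\sum a_{k} = \infty$, $\sum b_{k} < \infty$, then $v_{k}\to 0$. This gives $\lim_{k\to\infty}\mathbb{E}[\|\mathbf{x}_{k}-\mathbf{x}^{*}\|^{2}] = 0$.

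The main obstacle I anticipate is twofold. First, making the step $\mathbf{x}^{*} = \mathbf{P}_{\mathcal{X}}[\mathbf{x}^{*}-\alpha_{k}\beta_{k}\Phi(\mathbf{x}^{*})]$ rigorous requires the variational characterization of the NE, namely $(\mathbf{x}-\mathbf{x}^{*})^{\top}\Phi(\mathbf{x}^{*}) \geq 0$ for all $\mathbf{x}\in\mathcal{X}$, which follows from Assumption~\ref{assum: existence} (convexity) but should be stated. Second, and more delicate, is the interplay with Lemma~\ref{lem: y_bounded}: one must be careful that the summability $\sum\beta_{k}\mathbb{E}[\|\mathbf{y}_{k}-\mathbf{1}\bar{\mathbf{x}}_{k}\|^{2}]<\infty$ translates to summability of $\sum\alpha_{k}\beta_{k}\mathbb{E}[\|\mathbf{y}_{k}-\mathbf{1}\bar{\mathbf{x}}_{k}\|^{2}]$, which needs $\alpha_{k}$ uniformly bounded (e.g. $\alpha_{k}\leq\alpha_{0}$); this should be part of the standing step-size design and I would flag it explicitly. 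A final subtlety is ensuring the "$1-m\alpha_{k}\beta_{k}\in[0,1]$" requirement, which holds for $k$ large enough since $\alpha_{k}\beta_{k}\to 0$ (a consequence of $\sum\alpha_{k}^{2}\beta_{k}<\infty$ and $\sum\beta_{k}^{2}<\infty$ forcing $\alpha_k\beta_k \to 0$ along a subsequence, or more simply from the explicit decaying choice of step sizes), so the argument applies from some finite index onward, which does not affect the limit.
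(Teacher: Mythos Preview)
Your proposal is correct and follows essentially the same route as the paper: use the fixed-point characterization of $\mathbf{x}^{*}$ under projection, expand the squared distance, bound the quadratic term via Assumption~\ref{assum: bounded_gradient}, split the cross term into the strongly monotone part $(\Phi(\mathbf{x}_{k})-\Phi(\mathbf{x}^{*}))$ and the tracking-error part $(G(\mathbf{x}_{k},\mathbf{y}_{k})-G(\mathbf{x}_{k},\mathbf{1}\bar{\mathbf{x}}_{k}))$, apply Young's inequality, and close with Lemma~\ref{lem: y_bounded} and the step-size conditions. The only cosmetic differences are that the paper carries the factor $N$ in the $4C^{2}N\alpha_{k}^{2}\beta_{k}^{2}$ bound and finishes by summing the recursion and invoking $\sum\alpha_{k}\beta_{k}=\infty$ explicitly, whereas you package that last step as a Robbins--Siegmund-type lemma; the paper also handles the ``$\alpha_{k}<1$ eventually'' issue exactly as you anticipated, by noting $\alpha_{k}\to 0$ from~\eqref{eq: step}.
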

\begin{proof}
We have $\sum_{k=0}^{T} m \alpha_{k} \beta_{k} \mathbb{E} [ \| \mathbf{x}_{k} - \mathbf{x}^{*} \|^{2} ] \leq \| \mathbf{x}_{0} - \mathbf{x}^{*} \|^{2} - \mathbb{E} [ \| \mathbf{x}_{T+1} - \mathbf{x}^{*} \|^{2} ] + 4C^{2}N \sum_{k=0}^{T} \alpha_{k}^{2} \beta_{k}^{2}  + \frac{{L}_{g}^{2}}{m} \sum_{k=0}^{T} \alpha_{k}\beta_{k} \mathbb{E} [ \| \mathbf{y}_{k} - \mathbf{1} \bar{\mathbf{x}}_{k} \|^{2}]$. Since the right-hand side of this relationship is always bounded when $T \to \infty$, we can conclude that $ \mathbb{E} [ \|\mathbf{x}_{k} -  \mathbf{x}^{*}\|^{2} ]$ converges to zero. The detailed proof is provided in Appendix~\ref{app: thm1}.
\end{proof}

\begin{cor} \label{cor: rate}
	If $\alpha_{k} = \frac{c_{1}}{(c_{2}k+1)^{\omega_{1}}}$ and $\beta_{k} = \frac{c_{3}}{(c_{2}k+1)^{\omega_{2}}}$, where $c_{1}, c_{2}, c_{3}, \omega_{1}, \omega_{2} >0$, and $\omega_{1}$ and $\omega_{2}$ satisfy $\omega_{1} + \omega_{2}\leq 1$, $\omega_{2}>0.5$ and $2\omega_{1} + \omega_{2} >1$, then the convergence rate of $ \mathbb{E} [ \|\mathbf{x}_{k} -  \mathbf{x}^{*}\|^{2} ]$ is given by
	\begin{equation}
		\frac{\sum_{k=0}^{T} m \alpha_{k} \beta_{k} \mathbb{E} \left[ \| \mathbf{x}_{k} - \mathbf{x}^{*} \|^{2} \right]}{\sum_{k=0}^{T} \alpha_{k} \beta_{k}} = O\left(\frac{1}{(1+T)^{\omega}} \right),
	\end{equation}
	where $\omega = \min\{ 2 \omega_{1}, \omega_{2} \}$.
\end{cor}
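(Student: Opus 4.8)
The plan is to derive the convergence rate from the key inequality already exhibited in the proof of Theorem~\ref{thm: convergence}, namely
\[
\sum_{k=0}^{T} m \alpha_{k}\beta_{k}\,\mathbb{E}\big[\|\mathbf{x}_{k}-\mathbf{x}^{*}\|^{2}\big]
\;\le\; \|\mathbf{x}_{0}-\mathbf{x}^{*}\|^{2}
\;+\; 4C^{2}N\sum_{k=0}^{T}\alpha_{k}^{2}\beta_{k}^{2}
\;+\; \frac{L_{g}^{2}}{m}\sum_{k=0}^{T}\alpha_{k}\beta_{k}\,\mathbb{E}\big[\|\mathbf{y}_{k}-\mathbf{1}\bar{\mathbf{x}}_{k}\|^{2}\big].
\]
Dividing through by $\sum_{k=0}^{T}\alpha_{k}\beta_{k}$, the weighted average on the left is bounded by three terms, each of which I would estimate separately under the stated polynomial choice $\alpha_{k}=c_{1}(c_{2}k+1)^{-\omega_{1}}$, $\beta_{k}=c_{3}(c_{2}k+1)^{-\omega_{2}}$.

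First I would record the standard summation asymptotics for $\sum_{k=0}^{T}(c_{2}k+1)^{-p}$: it behaves like $\Theta\big((1+T)^{1-p}\big)$ when $p<1$, like $\Theta(\log(1+T))$ when $p=1$, and is $\Theta(1)$ when $p>1$. The denominator has exponent $p=\omega_{1}+\omega_{2}\le 1$, so $\sum_{k=0}^{T}\alpha_{k}\beta_{k}=\Omega\big((1+T)^{1-\omega_{1}-\omega_{2}}\big)$ (with a logarithmic factor in the borderline case $\omega_{1}+\omega_{2}=1$, which only helps). For the first term, $\|\mathbf{x}_{0}-\mathbf{x}^{*}\|^{2}$ is a constant, so after division it contributes $O\big((1+T)^{\omega_{1}+\omega_{2}-1}\big)$. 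For the second term, the numerator sum has exponent $2\omega_{1}+2\omega_{2}$; combined with the condition $\omega_{2}>0.5$ and $2\omega_{1}+\omega_{2}>1$ one checks $2\omega_{1}+2\omega_{2}>1$, so $\sum_{k=0}^{T}\alpha_{k}^{2}\beta_{k}^{2}=O(1)$, giving the same $O\big((1+T)^{\omega_{1}+\omega_{2}-1}\big)$ contribution after division.

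The third term is the crux and is where I expect the real work. Here I cannot merely use $\sum_{k}\beta_{k}\mathbb{E}[\|\mathbf{y}_{k}-\mathbf{1}\bar{\mathbf{x}}_{k}\|^{2}]<\infty$ from Lemma~\ref{lem: y_bounded}; I need a \emph{rate} for the partial sum $\sum_{k=0}^{T}\alpha_{k}\beta_{k}\mathbb{E}[\|\mathbf{y}_{k}-\mathbf{1}\bar{\mathbf{x}}_{k}\|^{2}]$. The approach is to revisit the consensus-error recursion used to prove Lemma~\ref{lem: y_bounded} and extract a per-iteration bound of the form $\mathbb{E}[\|\mathbf{y}_{k}-\mathbf{1}\bar{\mathbf{x}}_{k}\|^{2}]=O(\beta_{k})$ (the tracking error inherits the order of the consensus step size, since the non-vanishing compression error enters multiplied by $\beta_{k}$ and the gradient-driven drift $\mathbf{x}_{k+1}-\mathbf{x}_{k}$ is $O(\alpha_{k}\beta_{k})$, which is of smaller or comparable order). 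Granting $\mathbb{E}[\|\mathbf{y}_{k}-\mathbf{1}\bar{\mathbf{x}}_{k}\|^{2}]=O(\beta_{k})$, the numerator is $O\big(\sum_{k=0}^{T}\alpha_{k}\beta_{k}^{2}\big)$; its exponent is $\omega_{1}+2\omega_{2}$, and since $\omega_{2}>0.5$ forces $\omega_{1}+2\omega_{2}>\omega_{1}+\omega_{2}+0.5$, dividing by the denominator $\Omega\big((1+T)^{1-\omega_{1}-\omega_{2}}\big)$ yields a contribution of order $O\big((1+T)^{-\min\{2\omega_{2},\,\omega_{1}+\omega_{2}\}}\big)$ or faster; in all cases this is dominated by (or equal to) $O\big((1+T)^{\omega_{1}+\omega_{2}-1}\big)$ type decay once the remaining constraints are invoked. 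Combining the three contributions, the slowest rate is governed by $\max\{\omega_{1}+\omega_{2}-1,\dots\}$, and a careful bookkeeping of the three exponents against the three constraints $\omega_{1}+\omega_{2}\le 1$, $\omega_{2}>0.5$, $2\omega_{1}+\omega_{2}>1$ collapses the bound to $O\big((1+T)^{-\omega}\big)$ with $\omega=\min\{2\omega_{1},\omega_{2}\}$. I would close by remarking that the main obstacle — and the step most likely to require reproducing details from the appendix — is establishing the sharp $O(\beta_{k})$ bound on the tracking error $\mathbb{E}[\|\mathbf{y}_{k}-\mathbf{1}\bar{\mathbf{x}}_{k}\|^{2}]$ rather than the mere summability provided by Lemma~\ref{lem: y_bounded}; once that is in hand, the remaining estimates are routine applications of the summation asymptotics above.
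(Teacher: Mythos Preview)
Your plan is structurally identical to the paper's: start from the summed inequality in the proof of Theorem~\ref{thm: convergence}, divide by $\sum_{k\le T}\alpha_k\beta_k$, and estimate the three resulting fractions via polynomial partial-sum asymptotics. The gap lies in the rate you assign to the tracking error. You argue that $\mathbb{E}[\|\mathbf{y}_k-\mathbf{1}\bar{\mathbf{x}}_k\|^2]=O(\beta_k)$ because the compression noise enters as $\beta_k L\mathbf{e}_k$ while the drift $\mathbf{x}_{k+1}-\mathbf{x}_k=O(\alpha_k\beta_k)$ is ``smaller or comparable''. That linear-scale comparison is misleading once one passes to the squared recursion: revisiting the proof of Lemma~\ref{lem: y_bounded}, one has (for large $k$)
\[
\mathbb{E}\bigl[\|\mathbf{y}_{k+1}-\mathbf{1}\bar{\mathbf{x}}_{k+1}\|^{2}\bigr]\;\le\;(1-\beta_k\lambda_2)\,\mathbb{E}\bigl[\|\mathbf{y}_k-\mathbf{1}\bar{\mathbf{x}}_k\|^{2}\bigr]\;+\;O(\alpha_k^{2}\beta_k)\;+\;O(\beta_k^{2}),
\]
where the $O(\alpha_k^{2}\beta_k)$ term arises from Young's inequality (with parameter $\sim\beta_k$) applied to the cross term involving $\mathbf{x}_{k+1}-\mathbf{x}_k$. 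The paper then feeds this recursion into the Kar--Moura rate lemma (Lemma~\ref{lem: rate}) and obtains $\mathbb{E}[\|\mathbf{y}_k-\mathbf{1}\bar{\mathbf{x}}_k\|^{2}]=o\bigl((k+1)^{-\omega_4}\bigr)$ for every $\omega_4<\min\{2\omega_1,\omega_2\}$; in the paper's words, the tracking error is ``in the same order of $\alpha_k^{2}$ or $\beta_k$''. Your $O(\beta_k)$ claim is thus too optimistic whenever $2\omega_1<\omega_2$ (an admissible regime---take $\omega_1=0.2$, $\omega_2=0.7$), where the true decay is only $O(\alpha_k^{2})$.

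This omission is not cosmetic: the exponent $2\omega_1$ in $\omega=\min\{2\omega_1,\omega_2\}$ enters the paper's computation precisely through the $\alpha_k^{2}$ branch of the tracking-error decay, which then governs the third fraction. With only $O(\beta_k)$ in hand, your bookkeeping has no source for $2\omega_1$, and the asserted ``collapse'' to $\min\{2\omega_1,\omega_2\}$ does not follow from the exponents you have actually produced. The fix is exactly the paper's move: apply Lemma~\ref{lem: rate} to the one-step recursion above to obtain the two-case per-iteration rate, and then carry both cases through the weighted-average estimate for the third term.
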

\begin{proof}
	See Appendix~\ref{app: cor1}.
\end{proof}

\begin{rem}
The conventional DNES algorithm~\eqref{eq: conventional} achieves linear convergence to $\mathbf{x}^{*}$ with a constant step size. 
However, in CP-DNES, we leverage random compression errors in Assumption~\ref{assum: compressor} for privacy, potentially causing significant error accumulation with constant step sizes. 
To ensure both privacy and accurate convergence concurrently, we introduce diminishing step sizes $\alpha_{k}$ and $\beta_{k}$. 
Specifically, the diminishing $\beta_{k}$ in~\eqref{eq: x_update} and~\eqref{eq: y_update} mitigates the accumulation of compression errors affecting both consensus and gradient descent during convergence.
\end{rem}



\begin{rem} 
Assumption~\ref{assum: compressor} focuses on the \underline{u}nbiased \underline{c}ompressor with \underline{b}ounded \underline{a}bsolute compression error (UC-BA). 
	Some studies explore \underline{u}nbiased \underline{c}ompressors with \underline{b}ounded \underline{r}elative compression error (UC-BR)~\cite{chen2022linear}, characterized by $\mathbb{E}[\mathcal{C}(x)|x ] = x$ and $\mathbb{E}[ \| \mathcal{C}(x) - x \|^{2} ] \leq \varphi \|x \|^{2}$, with $\varphi > 0$. 
	It is noteworthy that UC-BA and UC-RB differ, and neither is more general than the other one~\cite{yi2022communication}. 
	Despite this, only Lemma 1 employs Assumption 6 in the convergence analysis, and we can easily prove the same results for UC-BR. Hence, Algorithm 1 maintains convergence to the exact NE in a mean square sense even under UC-BR.
\end{rem}
\begin{rem}
    In addition to UC-BA and UC-BR, there exists another type of compressor, namely \underline{b}iased \underline{c}ompressors (BC). However, BC introduces convergence errors, hindering accurate convergence. 
	Moreover, quantifying the DP level necessitates a specific error distribution.
The error distribution of UC-BR depends on the input distribution, and the error feedback commonly used for BC further complicates this distribution. Thus, investigating privacy-preserving DNES algorithms with UC-BR and BC remains a challenge. A rough idea to solve this issue is to adapt the compression scheme or design consensus approaches to control the compression error distribution~\cite{hegazy2023compression}. We leave the details as future work.
\end{rem}

\section{Differential Privacy Analysis} \label{sec: DP}
In this section, we will prove that CP-DNES ensures rigorous DP under a specific stochastic compression scheme~\cite{aysal2008distributed}.

\begin{definition} \label{defn: random_quantization}
	The element-wise stochastic compressor quantizes a vector $x = [x_{(1)}, x_{(2)}, \dots, x_{(n)}]^{\top} \in \mathbb{R}^{n}$ to the range representable by a scale factor $\theta \in \mathbb{N}_{+}$ and $b$ bits, $\{ -2^{b-1}\theta, \dots, -\theta, 0, \theta, \dots, ( 2^{b}-1 )\theta \}$, where $2^{b}\theta > |x_{(i)}|,\ \forall i =1, \dots, n$. In other words, $\mathcal{C}(x) = [\mathcal{C}(x_{(1)}), \mathcal{C}(x_{2}), \dots, \mathcal{C}(x_{(d)})]^{\top}$.
	For any $l\theta \leq x_{(i)} < (l+1)\theta$, the compressor outputs
	\begin{equation*} 
	\mathcal{C}(x_{(i)}) = \left\{
	\begin{array}{cl}
		l\theta, & \text{with probability}\ 1+l-x_{(i)}/\theta, \\
		(l+1)\theta, & \text{with probability}\ x_{(i)}/\theta - l. \\
	\end{array}
	\right.
\end{equation*}
\end{definition}

Definition~\ref{defn: random_quantization} is equivalent to dithered quantization in signal processing and it is illustrated in Fig.~\ref{fig: random_quantization}.
If the observed value by an attacker is $\theta$, the original data could take any value within the range $[0, 2 \theta)$. When increasing the scale factor $\theta$, we need fewer bits to ensure $2^{b}\theta > |x_{(i)}|$.
\begin{figure}[t]  
	\centering
	\includegraphics[width=0.45\linewidth]{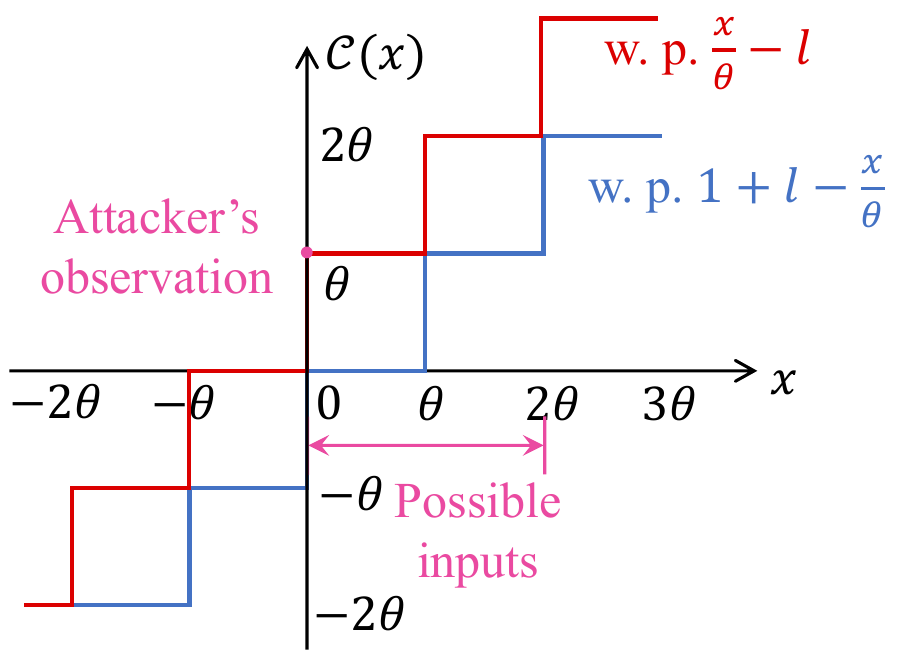}
	\caption{Illustration of the stochastic compressor in Definition~\ref{defn: random_quantization}.} 
	\label{fig: random_quantization}
\end{figure}

\begin{thm} \label{thm: DP}
	Under Assumptions~\ref{assum: graph}--\ref{assum: compressor}, if $\alpha_{k}$ and $\beta_{k}$ satisfy $\alpha_{k}\beta_{k} = \frac{c_{4}}{c_{5}k+1}$, where $c_{4}, c_{5}>0$, and the conditions in~\eqref{eq: step}, the stochastic compressor defined in Definition~\ref{defn: random_quantization} ensures the preservation of $(0, \delta_{k})$-DP for the objective function of each player at the $k$-th iteration while guaranteeing convergence, where 
	\begin{equation} \label{eq: privacy_level}
		\delta_{k} = \min \left\{ 1, \frac{2Cc_{4}\sqrt{n}}{c_{5}\theta} \ln(c_{5}k+1) \right \}.
	\end{equation}
\end{thm}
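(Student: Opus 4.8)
The plan is to quantify the information an eavesdropper can extract about a single player's objective function $f_{i_0}$ by observing the transmitted messages $\{\mathcal{C}(y_{i,k})\}$, and to bound the privacy parameter iteration-by-iteration. First I would fix an iteration $k$ and note that, under the stochastic quantizer of Definition~\ref{defn: random_quantization}, the only channel through which $f_{i_0}$ influences the output at step $k$ is via the internal state $y_{i_0,k}$, which in turn is shifted (relative to the adjacent run with $f_{i_0}'$) by the difference in accumulated gradient steps $x_{i_0,k+1}-x_{i_0,k}$. Using Assumption~\ref{assum: bounded_gradient} (bounded gradients by $C$), the projection nonexpansiveness, and the step-size product $\alpha_k\beta_k = c_4/(c_5 k+1)$, the per-coordinate sensitivity of $y_{i_0,k}$ between the two adjacent function sets is at most $\|x_{i_0,k+1}-x_{i_0,k}\| \le \alpha_k\beta_k C \le \frac{c_4 C}{c_5 k+1}$ at each step, and summing the geometric-type series over the first $k$ iterations yields a state sensitivity of order $\frac{c_4 C}{c_5}\ln(c_5 k+1)$ (times $\sqrt n$ for the Euclidean norm over $n$ coordinates).

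Next I would compute the probability divergence induced by the dithered quantizer. The key elementary fact is that for the stochastic quantizer on a single coordinate, two input values $u$ and $u'$ with $|u-u'|=\Delta$ induce output distributions whose total variation distance is at most $\Delta/\theta$: since the outputs are supported on the same grid with linearly interpolated probabilities, shifting the input by $\Delta$ moves probability mass by at most $\Delta/\theta$. Crucially, because the quantizer can realize $e^\epsilon = 1$ only in a $\delta$-type bound (the distributions are not mutually absolutely continuous with a bounded ratio — a grid point can have zero probability under one input and positive under the other), this gives a pure $(0,\delta)$-DP statement rather than $(\epsilon,\delta)$-DP. Accumulating over $n$ coordinates (and noting the per-iteration errors are independent by Assumption~\ref{assum: compressor}) converts the state sensitivity bound into $\delta_k = \min\{1, \frac{2C c_4\sqrt n}{c_5\theta}\ln(c_5 k+1)\}$, the factor $2$ absorbing constants from the grid-shift argument and the capping at $1$ being the trivial bound on any probability.

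The main obstacle I anticipate is handling the \emph{composition/coupling} correctly: the transmitted message at iteration $k$ is $\mathcal{C}(y_{i_0,k})$, but $y_{i_0,k}$ depends on the entire history of prior compressed values (through the consensus dynamics~\eqref{eq: y_compact}), which are themselves random. One must argue that, conditioned on the observed history, the adjacent runs can be coupled so that the state discrepancy is still controlled by the deterministic gradient-difference bound above — i.e., that the randomness in the quantizer does not amplify the sensitivity. This is where the structure~\eqref{eq: y_bar}, showing the average estimate is immune to compression error, and the contraction properties of $A_k = I-\beta_k L$ should be invoked to keep the discrepancy from blowing up. A secondary technical point is ensuring the sum $\sum_{j=0}^{k-1}\frac{1}{c_5 j+1}$ is correctly bounded by $\frac{1}{c_5}\ln(c_5 k+1)$ (up to the constant already folded into the statement), and verifying that the step-size choice $\alpha_k\beta_k = c_4/(c_5k+1)$ is simultaneously compatible with the three conditions in~\eqref{eq: step} required by Theorem~\ref{thm: convergence}, so that convergence and the stated privacy level hold together.
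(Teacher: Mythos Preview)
Your overall strategy matches the paper's: bound the sensitivity $\|\Delta y_{i_0,k}\| = \|y_{i_0,k}-y_{i_0,k}'\|$ between adjacent runs by summing the step-size--weighted gradient discrepancies, then convert this into a total-variation bound via the dithered quantizer, yielding a pure $(0,\delta_k)$ guarantee. Two points, however, deserve correction.

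First, the quantity you call the ``per-coordinate sensitivity,'' namely $\|x_{i_0,k+1}-x_{i_0,k}\|\le \alpha_k\beta_k C$, is the \emph{increment within one run}, not the discrepancy between the two adjacent runs. What actually enters the recursion is
\[
\Delta y_{i_0,k+1}=\Delta y_{i_0,k}-\alpha_k\beta_k\bigl(g_{i_0,k}-g_{i_0,k}'\bigr),
\]
and $\|g_{i_0,k}-g_{i_0,k}'\|\le 2C$ by Assumption~\ref{assum: bounded_gradient}. The factor $2$ in~\eqref{eq: privacy_level} therefore comes from the gradient-difference bound, not from ``constants in the grid-shift argument'' as you suggest; the quantizer step contributes exactly $\|\Delta y_{i_0,k}\|_1/\theta$ with no extra constant.

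Second, your anticipated obstacle --- that the consensus randomness might amplify sensitivity, to be controlled via the contraction of $A_k$ or the invariance~\eqref{eq: y_bar} --- is handled in the paper by a much simpler device. In the DP argument one conditions on the two executions producing the \emph{same} observed sequence $\{\mathcal{C}(y_{j,s})\}$. Under that conditioning, for every $j\neq i_0$ one has $x_{j,s}=x_{j,s}'$ and $y_{j,s}=y_{j,s}'$, and the consensus term $\beta_k\sum_j w_{i_0 j}(\mathcal{C}(y_{j,k})-\mathcal{C}(y_{i_0,k}))$ is literally identical in both runs, hence cancels in $\Delta y_{i_0,k+1}$. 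No contraction estimate is needed; the recursion above is exact. Invoking $A_k$-contraction or~\eqref{eq: y_bar} here would only obscure this cancellation.
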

\begin{proof}
For any pair of adjacent function sets $\{ \mathcal{F}\}$ and $\{ \mathcal{F}^{\prime}\}$, the eavesdropper has identical observations. Thus, we obtain that $\delta_{k} = \mathbb{P}[\mathcal{C}(y_{i(j)}) = l \theta | y_{i}] - \mathbb{P}[\mathcal{C}(y_{i(j)}^{\prime}) = l \theta | y_{i}^{\prime}] \leq  \frac{\| \Delta y_{i_{0},k} \|_{1}}{\theta}$ and  $\| \Delta y_{i_{0},k} \|_{1} \leq \sqrt{n} \| \Delta y_{i_{0},k} \| \leq \frac{2Cc_{4}\sqrt{n}}{c_{5}} \ln(c_{5}k+1)$.
Additionally, it is important to note that in DP, $\delta_{k}$ should be a small parameter within the range of $(0,1)$.
Hence, we derive the expression of $\delta_{k}$ shown in~\eqref{eq: privacy_level}.
The detailed proof is provided in Appendix~\ref{app: thm2}.
\end{proof}

\begin{rem}
	The ternary quantization scheme in~\cite{wang2022quantization} is a special case of Definition~\ref{defn: random_quantization}, which compresses each elements to $\{ -\theta, 0, \theta \}$ with $1$ bit, i.e., $b=1$, and $\theta$ should be larger than any possible values of $y_{i(j)}$. In this case, the privacy budget $\delta_{k} = \frac{2Cc_{4}\sqrt{n}}{c_{5}\theta} \ln(c_{5}k+1)$, which is a very small value.
\end{rem}             


\begin{rem}
	For the compressors in Assumption~\ref{assum: compressor}, Yi et al.~\cite{yi2022communication} and Liao et al.~\cite{liao2023linearly} transmit $r_{k}\mathcal{C}(y_{i,k}/ r_{k})$, where $r_{k}$ exponentially decays to decrease the compression error variance over time and achieve convergence. 
	However, this scaling method leads to an exponential growth of the privacy budget, which is $\frac{\| \Delta y_{i_{0},k} \|_{1}}{r_{k}\theta}$ per iteration. 
	Consequently, it fails to provide adequate privacy protection. Instead of using the scaling method to enable convergence, our algorithm adopts a carefully designed step size to ensure both convergence and DP simultaneously, considering the compressors specified in Definition~\ref{defn: random_quantization}.
\end{rem}
\begin{rem} 
    We utilize Assumption~\ref{assum: bounded_gradient} in convergence analysis and rely on the value of the gradient bound, $C$, to quantify the privacy level. Many studies employ bounded gradient for both convergence analysis and privacy quantification~\cite{ye2021differentially, wang2022quantization, lin2023statistical}.
    In future, it would be intriguing to employ induction techniques and consider specific adjacent objective function sets to eliminate the requirement of bounded gradients~\cite{huang2024differential}.
\end{rem}


\section{Numerical Simulations} \label{sec: sim}
We consider energy consumption games in a network of heating, ventilation, and air conditioning systems~\cite{ye2017distributed}, where five end-users communicate using a ring topology.
The objective function of user $i$ is $f_{i}(\mathbf{x}) = (x_{i} - s_{i})^{2} + (p_{0} \sum_{j=1}^{N} x_{j} + h )x_{i}$, $\mathcal{X}_{i} \in [30,50],\ \forall i \in \mathcal{N}$, where $s_{i}$ denotes the energy required to regulate indoor temperature, and $p_{0} \sum_{i=j}^{N} x_{j} + h$ represents the price. 
We set $s_{1} = 56$, $s_{2} = 40$, $s_{3} = 43$, $s_{4} = 60$, $s_{5} = 50$, $p_{0} = 0.05$ and $h=8$. 
Through centralized calculation, we determine that the aggregative game has a unique NE $\mathbf{x}^{*} = [45.8749, 30.2651, 33.1919, 49.7773, 40.0212]^{\top}$.

To evaluate the performance of CP-DNES, we set $\alpha_{k} = \frac{0.4}{(k+1)^{0.3}}$ and $\beta_{k} = \frac{0.4}{(k+1)^{0.6}}$. The estimate satisfies $|y_{i,k}|<90, \ \forall k \geq 0$, and $C=15$.
We employ the compressor given in Definition~\ref{defn: random_quantization}, with $\sigma = \frac{\theta}{2}$ in Assumption~\ref{assum: compressor}.
Therefore, the number of bits per agent per iteration is $b = \lceil \log_{2}(90/\theta) \rceil$ and CP-DNES preserves $(0, \delta_{k})$-DP with $\delta_{k} = \min \{ 1, \frac{4.8}{\theta} \ln (k+1)  \}$.
Due to the randomness of the compressor, we conducted the simulation $100$ times to obtain the empirical mean.

\subsection{Convergence-Communication-Privacy Trade-off}
We simulate CP-DNES using various compression parameters, $\mathbb{C}_{1}$, $\mathbb{C}_{2}$, and $\mathbb{C}_{3}$, as detailed in Table~\ref{tab: params}, and compare it with conventional DNES~\cite{ye2017distributed} under the same step sizes.
The comparison results are depicted in Fig.~\ref{fig: com}.
\begin{table}[t] 
    \centering
\caption{\small{Different compression parameters with their compression errors and privacy levels}}
    \begin{tabular}{|c|c|c|c|c|}
        \hline
         & $\theta$ & $b$ & $\sigma$ & $\delta_{k}$ \\
        \hline
        $\mathbb{C}_{1}$ & $10$ & $4$ & $5$ & $\delta_{k} = \min\{1, 0.48 \ln(k+1)\}$ \\
        \hline
        $\mathbb{C}_{2}$ & $40$ & $2$ & $20$  & $\delta_{k} = \min\{1, 0.12 \ln(k+1)\}$ \\
        \hline
        $\mathbb{C}_{3}$ & $60$ & $1$ & $30$& $\delta_{k} = \min\{1, 0.08 \ln(k+1)\}$ \\
        \hline
    \end{tabular}
    \label{tab: params}
\end{table}
Fig.~\ref{fig: con_delta} shows that players' decisions converge to the NE asymptomatically under CP-DNES and CP-DNES converges slower than the conventional DNES, with fewer transmitted bits. 
However, conventional DNES lacks any privacy protection, and Fig.~\ref{fig: com_res} illustrates the total transmitted bits required to achieve a certain quality of NE. 
Comparatively, conventional DNES transmitting floats using 32 bits per player per iteration consumes significantly more communication resources than CP-DNES.
Therefore, despite slower convergence, our algorithm significantly saves communication resources in obtaining the same NE quality.

A trade-off exists between the value of $b$ and the convergence rate, both of which impact the communication costs for attaining a certain quality of NE. 
	The compression parameter $\mathbb{C}_{2}$ uses fewer bits than $\mathbb{C}_{1}$ but converge to the NE with $\mathbb{E}[\|x_{k}-x^{*} \|] \leq 0.08$ more quickly than $\mathbb{C}_{3}$.
	Thus, as illustrated in Fig.~\ref{fig: com_res}, $\mathbb{C}_{2}$ consumes the least communication resources.
\begin{figure}[t]
	\centering
	\begin{subfigure}[t]{0.45\linewidth}
		\includegraphics[width=\linewidth]{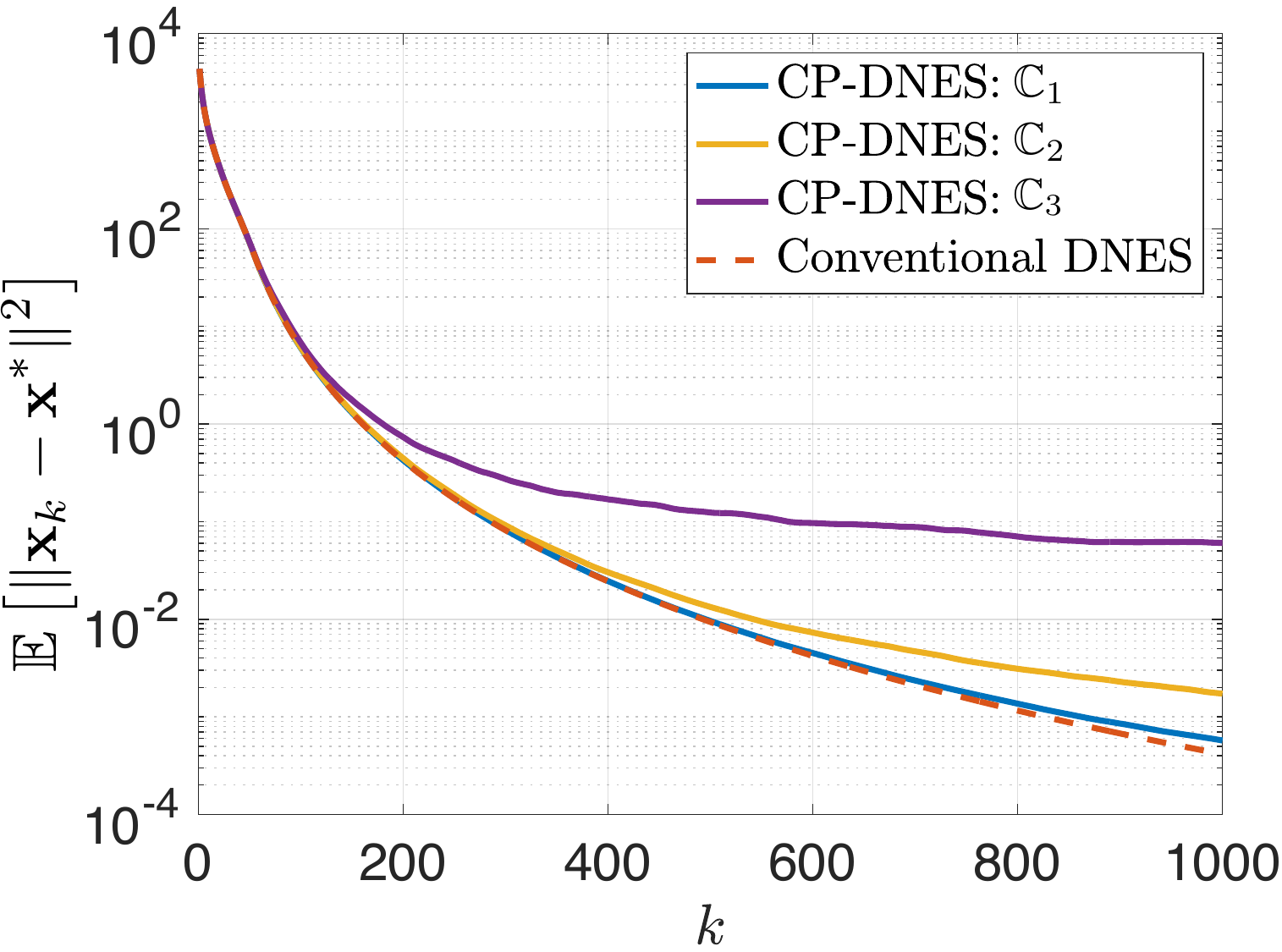}
  \captionsetup{font = small}
		\caption{Convergence comparison.}
		\label{fig: con_delta}
	\end{subfigure}
	\begin{subfigure}[t]{0.45\linewidth}
		\includegraphics[width=\linewidth]{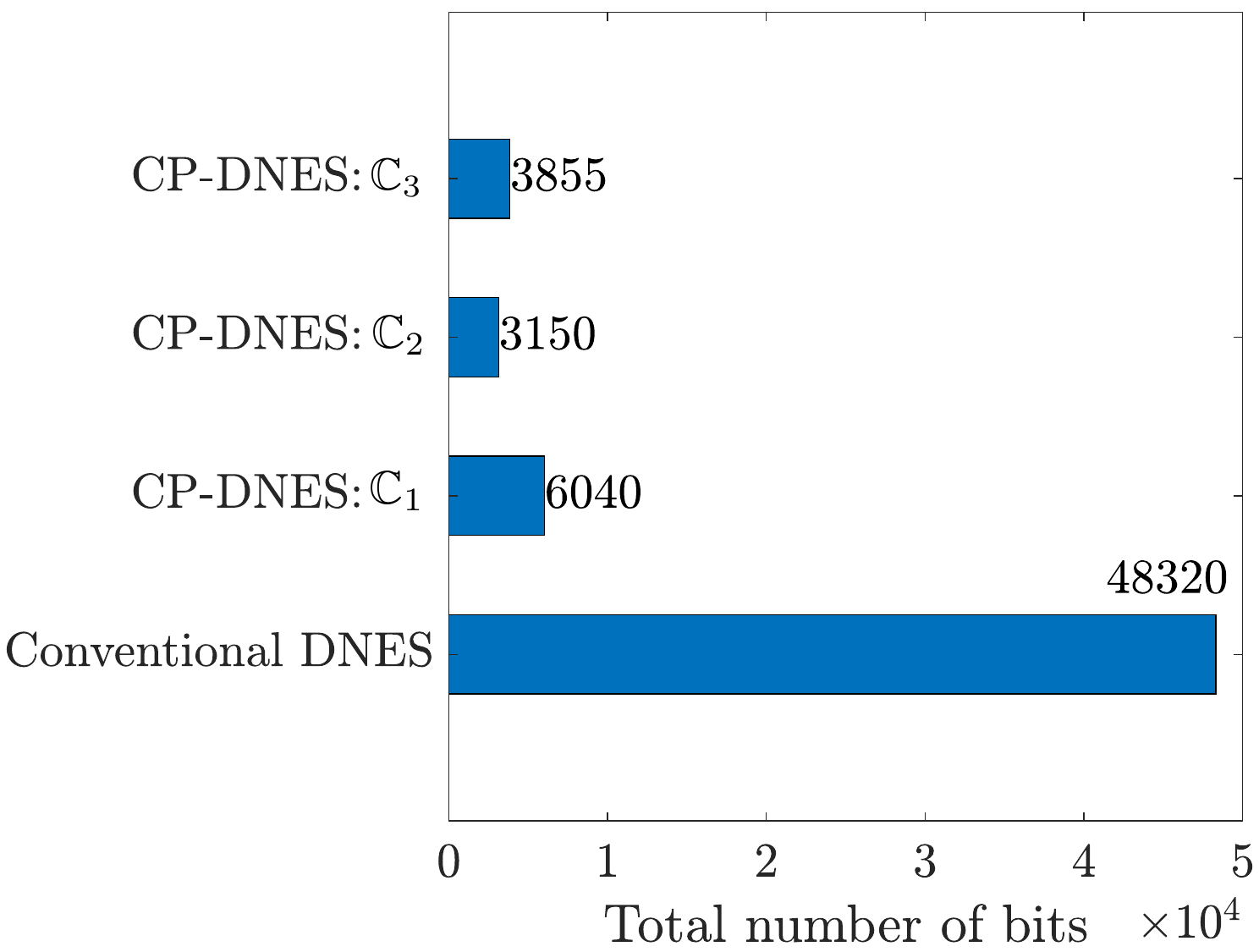}
  \captionsetup{font = small}
		\caption{Total transmitted bits to obtain $\mathbb{E}[\|x_{k}-x^{*} \|^{2}] \leq 0.08$.} 
		\label{fig: com_res} 
	\end{subfigure}
\caption{\small {Comparison between conventional DNES without compression-based privacy preservation~\cite{ye2017distributed} and CP-DNES with different compression parameters shown in Table~\ref{tab: params}. }}
\label{fig: com}
\end{figure}


\subsection{Comparison with State-of-the-Art}
We compare our proposed algorithm with some existing DNES algorithms, including NP-DNES (DNES with noise perturbation)~\cite{ye2021differentially}
and DSC-DNES (DNES with dynamic scaling compression used in~\cite{liao2023linearly, yi2022communication}) in Fig.~\ref{fig: compare}. 
The variance of the noise at iteration $k$ in NP-DNES is $0.91^{k}$. We set the dynamic factor in DSC-DNES as $r_{k}=0.87^{k}$, with each player using 8 bits to transmit the message per iteration.
Fig.~\ref{fig: compare} demonstrates that CP-DNES with $\mathbb{C}_{3}$ has superior convergence performance even if each player transmits only 1 bit per iteration, and requires the fewest bits to achieve a specific quality of NE.
In NP-DNES, noise perturbation affects accuracy. In DSC-DNES, dynamic scaling amplifies compression error under the specific communication constraint. 

\begin{figure}[t]
	\centering
	\begin{subfigure}[t]{0.45\linewidth}
		\includegraphics[width=\linewidth]{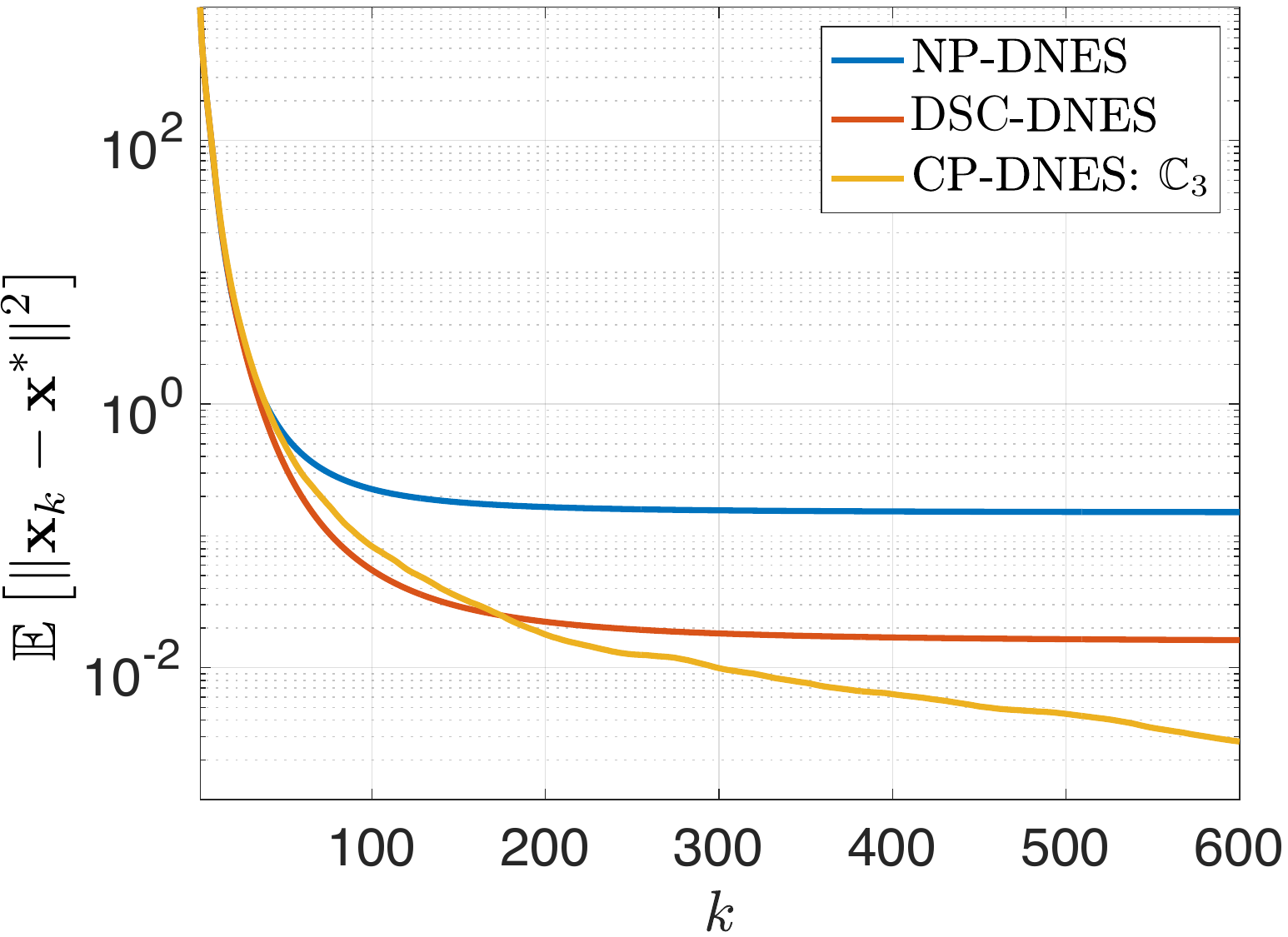}
  \captionsetup{font = small}
		\caption{Convergence comparison.}
		\label{fig: compare_con}
	\end{subfigure}
	\begin{subfigure}[t]{0.45\linewidth}
		\includegraphics[width=\linewidth]{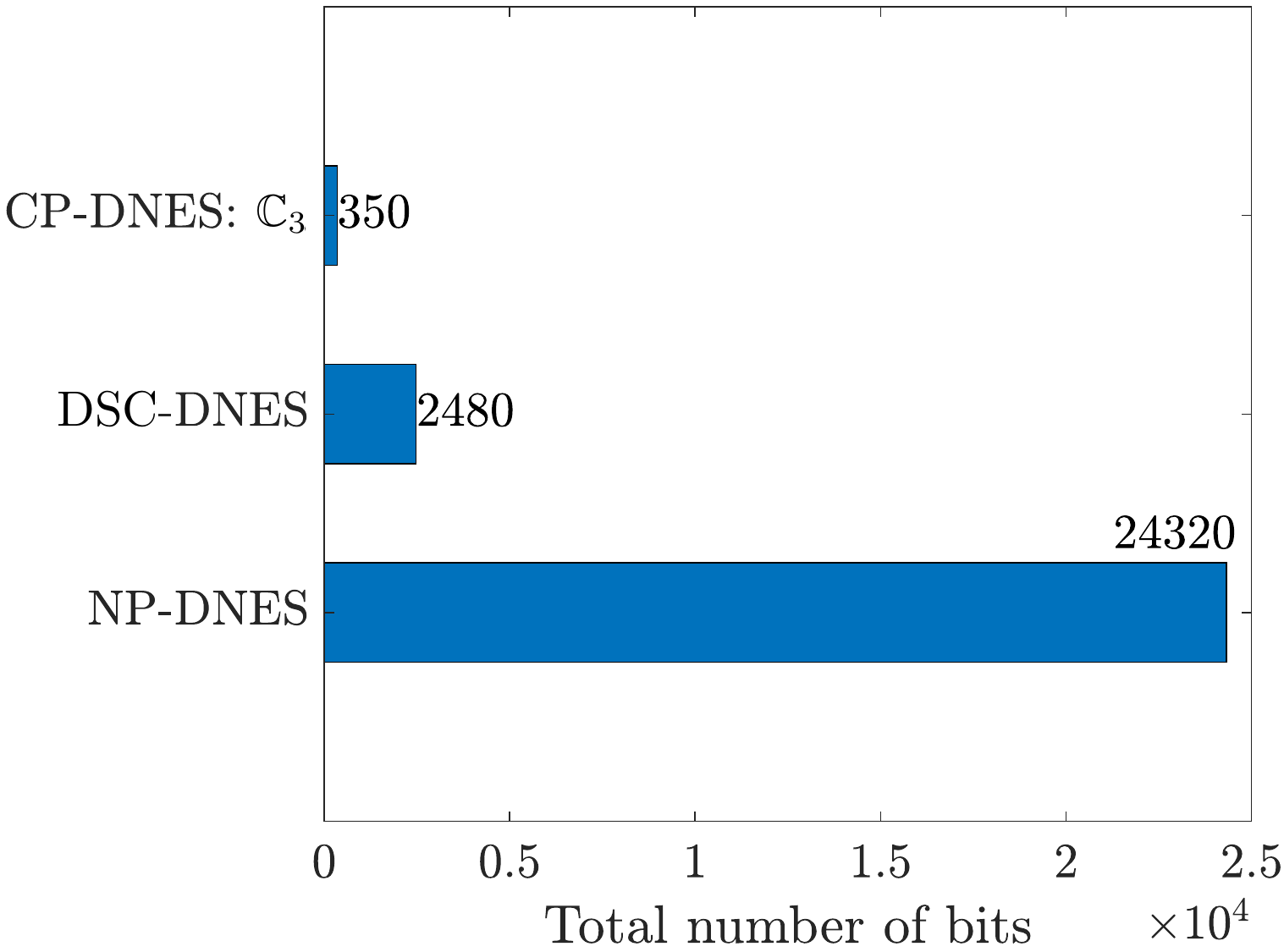}
  \captionsetup{font = small}
		\caption{Total transmitted bits to obtain $\mathbb{E}[\|x_{k}-x^{*} \|] \leq 0.18$.}
		\label{fig: compare_bits}
	\end{subfigure}
\caption{\small{Comparison between NP-DNES~\cite{ye2021differentially}, DSC-DNES~\cite{yi2022communication, liao2023linearly}, and CP-DNES: $\mathbb{C}_{3}$.}}
\label{fig: compare}
\end{figure}

\section{Conclusion and Future Work} \label{sec: conclusions}
This paper investigates privacy preservation in decentralized aggregative games and proposes a distributed NE seeking algorithm robust to aggregative compression effects. 
By incorporating decaying step sizes, we ensure convergence accuracy while leveraging random compression errors to protect shared data and sensitive information.

Potential future research directions include the design of stochastic adaptive compressors to enhance the convergence rate. Additionally, exploiting stochastic event-triggered mechanisms can further reduce communication costs and introduce randomness to bolster privacy protection.

\appendix

\subsection{Useful Lemmas}
The following results are used in the proofs.
%
\begin{lem} \label{lem: pre_con}
~\cite{robbins1971convergence}
	Let $\{ u_{k}\}$, $\{ v_{k}\}$, $\{ w_{k} \}$ and $\{ z_{k} \}$ be the nonnegative sequences of random variables. If they satisfy
	\begin{align*}
		\mathbb{E}[ u_{k+1} ] \leq  (1+ z_{k})u_{k} - v_{k} + w_{k}, \\
		\sum_{k=0}^{\infty} z_{k} < \infty \ a.s., \ \text{and} \  \sum_{k=0}^{\infty} w_{k} < \infty \ a.s.,
	\end{align*}
	then $u_{k}$ converges almost surely (a.s.), and $\sum_{k=0}^{\infty} v_{k} < \infty$ a.s..
\end{lem}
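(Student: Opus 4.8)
The plan is to give the classical supermartingale argument of Robbins and Siegmund, reading the hypothesis in its intended conditional form $\mathbb{E}[u_{k+1}\mid\mathcal{F}_k]\le(1+z_k)u_k-v_k+w_k$ for a filtration $\{\mathcal{F}_k\}$ to which all four sequences are adapted (the purely unconditional reading would render the almost-sure conclusions meaningless). The core idea is to manufacture from the recursion a genuine nonnegative supermartingale, apply Doob's a.s.\ convergence theorem, and then read off both claims from the resulting limit.

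First I would remove the multiplicative factor $1+z_k$ by rescaling. Set $\alpha_k=\prod_{j=0}^{k-1}(1+z_j)^{-1}$ with $\alpha_0=1$. Since $z_j\ge0$ and $\sum_j z_j<\infty$ a.s., the product $\prod_j(1+z_j)$ converges a.s.\ (equivalently $\sum_j\log(1+z_j)<\infty$), so $\alpha_k$ is nonincreasing and tends a.s.\ to a strictly positive limit $\alpha_\infty\in(0,1]$. Multiplying the recursion by the $\mathcal{F}_k$-measurable factor $\alpha_{k+1}$ and using $\alpha_{k+1}(1+z_k)=\alpha_k$ gives $\mathbb{E}[\alpha_{k+1}u_{k+1}\mid\mathcal{F}_k]\le\alpha_k u_k-\alpha_{k+1}v_k+\alpha_{k+1}w_k$, replacing the inhomogeneous bound by a clean decreasing-plus-noise inequality.

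Next I would introduce $X_k=\alpha_k u_k+\sum_{j=0}^{k-1}\alpha_{j+1}(v_j-w_j)$, which by the previous display satisfies $\mathbb{E}[X_{k+1}\mid\mathcal{F}_k]\le X_k$, i.e.\ it is a supermartingale. It is bounded below by $-W_\infty$, where $W_\infty=\sum_{j=0}^{\infty}\alpha_{j+1}w_j\le\sum_j w_j<\infty$ a.s. \emph{The main obstacle is precisely here}: $X_k$ is bounded below only by a random, possibly non-integrable quantity, so Doob's theorem does not apply directly. I would resolve this by localization. For each $a>0$ define the stopping time $\tau_a=\inf\{k:\sum_{j=0}^{k}\alpha_{j+1}w_j>a\}$; then on $\{k\le\tau_a\}$ the predictable remainder satisfies $\sum_{j=0}^{k-1}\alpha_{j+1}w_j\le a$, so the stopped process $X_{k\wedge\tau_a}+a$ is a nonnegative supermartingale and converges a.s. Because $\{\tau_a=\infty\}=\{W_\infty\le a\}$, $X_k$ converges on this event; letting $a\uparrow\infty$ through the rationals covers $\{W_\infty<\infty\}$, an a.s.\ event, so $X_k$ converges a.s.\ to a finite limit.

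Finally I would extract the two conclusions. Since $W_k=\sum_{j=0}^{k-1}\alpha_{j+1}w_j\to W_\infty<\infty$ a.s., convergence of $X_k$ forces convergence of $\alpha_k u_k+\sum_{j=0}^{k-1}\alpha_{j+1}v_j$. The partial sums $\sum_{j=0}^{k-1}\alpha_{j+1}v_j$ are nondecreasing and, being dominated by this convergent sequence (as $\alpha_k u_k\ge0$), are bounded, hence convergent; thus $\sum_j\alpha_{j+1}v_j<\infty$, and then $\sum_j v_j<\infty$ a.s.\ because $\alpha_{j+1}\ge\alpha_\infty>0$. Subtracting the convergent sum shows $\alpha_k u_k$ converges, and dividing by $\alpha_k\to\alpha_\infty\in(0,\infty)$ yields a.s.\ convergence of $u_k$, establishing both assertions.
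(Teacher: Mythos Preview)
Your argument is the standard Robbins--Siegmund supermartingale proof and is correct: the rescaling by $\alpha_k=\prod_{j<k}(1+z_j)^{-1}$, the construction of the supermartingale $X_k$, the localization with the stopping times $\tau_a$ to handle the random lower bound, and the extraction of the two conclusions are all carried out cleanly. Your observation that the hypothesis must be read as a conditional expectation $\mathbb{E}[u_{k+1}\mid\mathcal{F}_k]$ is also well taken; the paper's unconditional notation is a slip.

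The paper, however, does not prove this lemma at all: it is stated as a quoted auxiliary result with a citation to Robbins and Siegmund (1971) and is used as a black box in the proofs of Lemma~1 and Theorem~1. So there is no ``paper's proof'' to compare against; you have supplied what the paper simply imports. What your write-up buys is a self-contained justification, at the cost of a page of martingale machinery that the authors evidently preferred to outsource.
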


\begin{lem} \label{lem: supermartingals}
~\cite{robbins1971convergence}
	Let $\{ u_{k} \}$ be a non-negative sequence satisfying the following relationship for all $k \geq 0$:
	\begin{equation}
		u_{k+1} \leq (1+ \gamma_{k} ) u_{k} + v_{k},
	\end{equation} 
	where sequence $\gamma_{k} \geq 0$ and $v_{k} \geq 0$ satisfy $\sum_{k=0}^{\infty} \gamma_{k} < \infty$ and $\sum_{k=0}^{\infty} v_{k} < \infty$, respectively. Then the sequence $\{ u_{k} \}$ will converge to a finite value $u>0$.
\end{lem}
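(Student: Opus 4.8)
The plan is to recognize Lemma~\ref{lem: supermartingals} as the purely deterministic specialization of the Robbins--Siegmund result already recorded in Lemma~\ref{lem: pre_con}, and then back this up with a self-contained elementary argument so the statement does not have to lean on the full stochastic machinery. The fast route is an immediate reduction: treat the deterministic quantities $u_k,\gamma_k,v_k$ as degenerate nonnegative random variables, identify $\gamma_k$ with the multiplier sequence $z_k$ and $v_k$ with the additive term $w_k$ in Lemma~\ref{lem: pre_con}, and set the subtracted term there to zero. Since the recursion $u_{k+1}\le(1+\gamma_k)u_k+v_k$ holds surely it holds in expectation, and the hypotheses $\sum\gamma_k<\infty$, $\sum v_k<\infty$ are exactly those required, so Lemma~\ref{lem: pre_con} yields almost-sure convergence of $u_k$, which for a deterministic sequence is convergence to a finite limit.

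For a self-contained proof I would instead normalize by the accumulated product. First I would set $\Gamma_k=\prod_{j=0}^{k-1}(1+\gamma_j)$ with $\Gamma_0=1$, and note that $\log\Gamma_k=\sum_{j=0}^{k-1}\log(1+\gamma_j)\le\sum_{j=0}^{k-1}\gamma_j\le\sum_{j=0}^{\infty}\gamma_j<\infty$; since $\Gamma_k$ is nondecreasing and bounded above, it converges to a finite limit $\Gamma_\infty\ge1$. Dividing the recursion by $\Gamma_{k+1}=(1+\gamma_k)\Gamma_k$ gives $w_{k+1}\le w_k+v_k/\Gamma_{k+1}$ for the rescaled sequence $w_k=u_k/\Gamma_k$, and because $\Gamma_{k+1}\ge1$ the perturbation is summable, $\sum_k v_k/\Gamma_{k+1}\le\sum_k v_k<\infty$.

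The key step is to turn this one-sided inequality into a genuinely monotone sequence. I would define $\tilde{w}_k=w_k-\sum_{j=0}^{k-1}v_j/\Gamma_{j+1}$ and verify that $\tilde{w}_{k+1}\le\tilde{w}_k$, so $\tilde{w}_k$ is nonincreasing; it is bounded below by $-\sum_{j=0}^{\infty}v_j$ because $w_k\ge0$. A nonincreasing sequence bounded below converges, hence $\tilde{w}_k$ has a finite limit, and adding back the convergent series $\sum_{j=0}^{k-1}v_j/\Gamma_{j+1}$ shows $w_k$ converges; finally $u_k=\Gamma_k w_k$ converges to a finite nonnegative value since $\Gamma_k\to\Gamma_\infty$. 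I would also remark that the limit is in general only nonnegative, the strict positivity asserted in the statement holding under an additional lower-bound hypothesis on the data.

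The main obstacle, such as it is for a standard lemma, is purely the bookkeeping in this last step: the recursion controls $u_{k+1}$ only from above, so one cannot argue monotonicity of $u_k$ directly and must first absorb the multiplicative factor through the product $\Gamma_k$ and the additive perturbation through the telescoped partial sums before a monotone-convergence argument applies. Beyond the elementary fact $\sum\gamma_k<\infty\Rightarrow\prod(1+\gamma_k)<\infty$, no analytical difficulty arises.
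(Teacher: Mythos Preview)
Your proposal is correct. The paper does not actually prove Lemma~\ref{lem: supermartingals}; it simply cites \cite{robbins1971convergence}, the same source as Lemma~\ref{lem: pre_con}, so your first route---specializing Lemma~\ref{lem: pre_con} to deterministic sequences with the subtracted term set to zero---is exactly in the spirit of how the paper treats the result. Your self-contained product-normalization argument is a genuine addition beyond what the paper offers and is a standard, clean way to handle such recursions; it buys independence from the stochastic Robbins--Siegmund machinery at essentially no cost. Your closing remark is also apt: the limit is only guaranteed to be nonnegative, so the ``$u>0$'' in the statement should read ``$u\geq 0$''.
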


\begin{lem}\label{lem: rate}
\cite{kar2011convergence} Let $\{ u_{k} \}$ be a non-negative sequence satisfying the following relationship for all $k \geq 0$:
\begin{equation}
	u_{k + 1} \leq (1- \gamma_{1,k} ) u_{k} + \gamma_{2,k}
\end{equation}
where $\gamma_{1,k} \geq 0$ and $\gamma_{2,k} \geq 0$ satisfying 
\begin{equation*}
	\frac{a_{1}}{(a_{2}k+1)^{b_{1}}} \leq \gamma_{1,k} \leq 1, \
	\frac{a_{3}}{(a_{2}k+1)^{b_{2}}} \leq \gamma_{2,k} \leq 1,
\end{equation*}
for some $c_{1}, c_{2}, c_{3} > 0$, $0 \leq b_{1} < 1$, and $b_{1} < b_{2}$. Then for all $0 \leq b_{0} < b_{2} - b_{1}$, we have $\lim_{k \to \infty}(k+1)^{b_{0}} u_{k} = 0$.
\end{lem}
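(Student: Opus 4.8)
The plan is to reduce the weighted recursion to a standard scalar contraction of the form $w_{k+1}\le(1-\psi_k)w_k+\xi_k$ with $\psi_k\in[0,1]$, $\sum_k\psi_k=\infty$, and $\xi_k/\psi_k\to0$, and then invoke the elementary fact that any nonnegative sequence obeying such a recursion tends to $0$. I would establish that fact inline: for $\epsilon>0$ choose $K$ with $\xi_k\le\epsilon\psi_k$ for $k\ge K$, so that $w_{k+1}-\epsilon\le(1-\psi_k)(w_k-\epsilon)$; the truncation $p_k:=\max\{w_k-\epsilon,0\}$ then satisfies $p_{k+1}\le(1-\psi_k)p_k$, hence $p_k\le p_K\prod_{j=K}^{k-1}(1-\psi_j)\to0$ because $\sum_j\psi_j=\infty$ with $\psi_j\le1$. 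This forces $\limsup_k w_k\le\epsilon$, and letting $\epsilon\downarrow0$ together with $w_k\ge0$ yields $w_k\to0$. The whole proof is about producing a $(\psi_k,\xi_k)$ of this type from the weighted sequence $w_k:=(k+1)^{b_0}u_k$, whose convergence to $0$ is exactly the claim.

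First I would multiply the hypothesis by $(k+2)^{b_0}$ to obtain $w_{k+1}\le\bigl(\tfrac{k+2}{k+1}\bigr)^{b_0}(1-\gamma_{1,k})\,w_k+(k+2)^{b_0}\gamma_{2,k}$, and then bound the growth factor by $\bigl(1+\tfrac1{k+1}\bigr)^{b_0}\le1+\tfrac{C_0}{k+1}$ for a constant $C_0$ depending only on $b_0$, so that the coefficient of $w_k$ is at most $1-\gamma_{1,k}+\tfrac{C_0}{k+1}$. The decisive step is to show this spurious $+\tfrac{C_0}{k+1}$ is dominated by half the contraction. Since $\gamma_{1,k}\ge a_1(a_2k+1)^{-b_1}$ with $b_1<1$, the contraction decays strictly slower than $1/(k+1)$, so there is $K_0$ with $\tfrac{C_0}{k+1}\le\tfrac12\gamma_{1,k}$ for all $k\ge K_0$. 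On $k\ge K_0$ the recursion collapses to $w_{k+1}\le\bigl(1-\tfrac12\gamma_{1,k}\bigr)w_k+(k+2)^{b_0}\gamma_{2,k}$, and I set $\psi_k=\tfrac12\gamma_{1,k}\in[0,\tfrac12]$ and $\xi_k=(k+2)^{b_0}\gamma_{2,k}$.

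It then remains to check the two conditions. Divergence $\sum_k\psi_k=\infty$ is immediate from $\psi_k\ge\tfrac{a_1}{2}(a_2k+1)^{-b_1}$ and $b_1<1$. For the ratio, using the prescribed polynomial decay of the forcing term, $\gamma_{2,k}=O\bigl((a_2k+1)^{-b_2}\bigr)$, I get $\xi_k=O\bigl(k^{b_0-b_2}\bigr)$, whence $\xi_k/\psi_k=O\bigl(k^{\,b_0-(b_2-b_1)}\bigr)\to0$ precisely because $b_0<b_2-b_1$. Applying the elementary fact on the tail $k\ge\max\{K_0,K\}$ gives $w_k\to0$, i.e. $(k+1)^{b_0}u_k\to0$. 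The main obstacle, and the only place the exponent ranges are truly used beyond the rate counting, is the absorption step: scaling by $(k+1)^{b_0}$ inflates the coefficient by a multiplicative factor whose linearization contributes an $O(1/k)$ perturbation, and the entire conclusion hinges on this perturbation being swallowed by $\gamma_{1,k}$, which works exactly because $b_1<1$ makes $k^{-b_1}$ dominate $k^{-1}$. Everything else is routine bookkeeping.
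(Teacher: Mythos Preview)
The paper does not actually prove this lemma; it is listed under ``Useful Lemmas'' and merely cited from~\cite{kar2011convergence}. Your argument is a correct, self-contained proof of the intended result, and it follows the same mechanism as in the original reference: weight the sequence by $(k+1)^{b_0}$, absorb the resulting multiplicative growth $(1+1/(k+1))^{b_0}=1+O(1/k)$ into the contraction using $b_1<1$, and then appeal to the elementary criterion that a nonnegative sequence obeying $w_{k+1}\le(1-\psi_k)w_k+\xi_k$ with $\sum_k\psi_k=\infty$ and $\xi_k/\psi_k\to0$ must tend to zero.

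One remark: the statement as printed in the paper contains a typo. The condition on $\gamma_{2,k}$ should be an \emph{upper} bound $\gamma_{2,k}\le a_3/(a_2k+1)^{b_2}$, not the lower bound shown; with only $\gamma_{2,k}\ge a_3/(a_2k+1)^{b_2}$ and $\gamma_{2,k}\le1$ the conclusion is false (take $\gamma_{2,k}\equiv1$). You implicitly use the corrected hypothesis when you write ``$\gamma_{2,k}=O\bigl((a_2k+1)^{-b_2}\bigr)$,'' which is the right reading and is what the cited source states.
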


\subsection{Proof of Lemma 1} \label{app: lem1}

Denote $\bar{\mathbf{y}} = \frac{1}{N} \mathbf{1}^{\top} \mathbf{y}$ and $\Theta = I - \frac{1}{N} \mathbf{1} \mathbf{1}^{\top}$, then we have $ \left\| \Theta \right\| \leq 1$ and
\begin{align} \label{eq: y-x}
	 \left\| \mathbf{y}_{k+1} - \mathbf{1}\bar{\mathbf{x}}_{k+1} \right\|^{2} 
	=& \left\| \Theta \mathbf{y}_{k+1} + \mathbf{1}\bar{\mathbf{y}}_{k+1} - \mathbf{1}\bar{\mathbf{x}}_{k+1} \right\|^{2} \nonumber \\
	\leq  & \left\| \mathbf{y}_{k+1} \right\|^{2} ,
\end{align} 
where the inequality follows from (8) in the main article. 
Therefore, we can complete the proof by showing $ \sum_{k=0}^{\infty} \beta_{k} \mathbb{E}\left[\| \mathbf{y}_{k} \|^{2} \right] < \infty$. 
Since $\sum_{k=1}^{\infty} \beta_{k}^{2} < \infty$, there obviously exists $k_{1}$ such that $\beta_{k} < 1/\lambda_{2}$, $\forall k > k_{1}$. We know that $ \sum_{k=0}^{k_{1}} \beta_{k} \mathbb{E}\left[\| \mathbf{y}_{k} \|^{2} \right]$ is always bounded. Therefore, we only need to focus on proving $ \sum_{k=k_{1}+1}^{\infty} \beta_{k} \mathbb{E}\left[\| \mathbf{y}_{k} \|^{2} \right] < \infty$

Based on (7b) in the main article, we obtain
\begin{align}
	&\mathbb{E}\left[ \| \mathbf{y}_{k+1} \|^{2} \right] \nonumber \\
	 = & \mathbb{E} \left[ \| A_{k} \mathbf{y}_{k} + \mathbf{x}_{k+1} - \mathbf{x}_{k} \|^{2} \right] + \beta_{k}^{2} \mathbb{E}\left[ \| L \mathbf{e}_{k} \|^{2} \right] \nonumber \\
	 & - \mathbb{E} \left[ \beta_{k} (A_{k} \mathbf{y}_{k} + \mathbf{x}_{k+1} - \mathbf{x}_{k})^{\top}L  \mathbf{e}_{k} \right ] \nonumber \\
	 \leq & \mathbb{E}\left[ (\left\| A_{k} \mathbf{y}_{k} \right\| + \left\| \mathbf{x}_{k+1} - \mathbf{x}_{k} \right\|)^{2} \right] + \beta_{k}^{2} \lambda_{M}^{2}N \sigma^{2} \nonumber \\
	 \leq & (1+\nu_{1})(1-\beta_{k} \lambda_{2} )^{2} \mathbb{E}\left[ \| \mathbf{y}_{k} \|^{2} \right] + (1+\frac{1}{\nu_{1}})C^{2}N \alpha_{k}^{2} \beta_{k}^{2} \nonumber \\
	 & + \lambda_{M}^{2}N \sigma^{2}\beta_{k}^{2},
\end{align}
where the first inequality holds from $\mathbb{E}[\mathbf{e}_{k}] = 0$ and $\mathbb{E}\left[ \| \mathbf{e}_{k} \|^{2} \right] \leq N \sigma^{2}$ in Assumption 6, 
the last inequality uses $(a+b)^{2} \leq (1+\nu_{1})a^{2} + (1+\frac{1}{\nu_{1}})b^{2}$ for any $a,b \in \mathbb{R}$ and $\nu_{1}>0$,
and $\lambda_{2}$ and $\lambda_{M}$ are the second smallest eigenvalue and the eigenvalue with the largest magnitude of $L$~\cite{kar2013distributed}, respectively. 
By setting $\nu_{1} = \beta_{k} \lambda_{2}$, we further get
\begin{align} \label{eq: y_bound_1}
	&\mathbb{E}\left[ \| \mathbf{y}_{k+1} \|^{2} \right] \nonumber \\
	\leq & (1-\beta_{k}^{2} \lambda_{2}^{2})(1-\beta_{k} \lambda_{2}) \mathbb{E}\left[ \| \mathbf{y}_{k} \|^{2} \right] + (\alpha_{k}^{2} \beta_{k}^{2} + \frac{\alpha_{k}^{2} \beta_{k}}{\lambda_{2}})C^{2}N \nonumber \\
	& + \lambda_{M}^{2}N \sigma^{2}\beta_{k}^{2} \nonumber \\
	\leq & (1-\beta_{k} \lambda_{2}) \mathbb{E}\left[ \| \mathbf{y}_{k} \|^{2} \right] + (\alpha_{k}^{2} \beta_{k}^{2} + \frac{\alpha_{k}^{2} \beta_{k}}{\lambda_{2}})C^{2}N + \lambda_{M}^{2}N \sigma^{2}\beta_{k}^{2}.
\end{align}
Since $\beta_{k}\lambda_{2}>0$, we can have the following relationship:
\begin{align} 
	\mathbb{E}\left[ \| \mathbf{y}_{k+1} \|^{2} \right] 
	\leq & (1+\beta_{k}^{2}) \mathbb{E} \left[ \left\| \mathbf{y}_{k} \right\|^{2} \right] - \lambda_{2} \beta_{k} \mathbb{E} \left[ \left\| \mathbf{y}_{k} \right\|^{2} \right] \nonumber \\
	&+ (\alpha_{k}^{2} \beta_{k}^{2} + \frac{\alpha_{k}^{2} \beta_{k}}{\lambda_{2}})C^{2}N + \lambda_{M}^{2}N \sigma^{2}\beta_{k}^{2}.
\end{align}
Since $\sum_{k=k_{1}+1}^{\infty} \alpha_{k}^{2} \beta_{k} < \infty$ and $\sum_{k=k_{1}+1}^{\infty} \beta_{k}^{2} < \infty$, it is guaranteed that $\sum_{k=k_{1}+1}^{\infty} [(\alpha_{k}^{2} \beta_{k}^{2} + \frac{\alpha_{k}^{2} \beta_{k}}{\lambda_{2}})C^{2}N + \lambda_{M}^{2}N \sigma^{2}\beta_{k}^{2}] < \infty $.
Thus, we can conclude that $\sum_{k=k_{1}+1}^{\infty}  \beta_{k} \mathbb{E} \left[ \| \mathbf{y}_{k} \|^{2} \right] < \infty$ from Lemma~\ref{lem: pre_con}.

\subsection{Proof of Theorem 1} \label{app: thm1}


Obviously, there exists $k_{0}$ such that $\beta_{k}<1, \forall k > k_{0}$, to satisfy $\sum_{k=0}^{\infty} \beta_{k}^{2} < \infty$.
According to the dynamics in (7a), it can be verified that the following relationship holds:
\begin{align} \label{eq: x_bounded_1}
	& \mathbb{E} \left[ \| \mathbf{x}_{k+1} - \mathbf{x}^{*}  \|^{2} \right]\nonumber \\
 =& \mathbb{E} \left[ \| \mathbf{P}_{\mathcal{X}}[\mathbf{x}_{k} - \alpha_{k}\beta_{k}G(\mathbf{x}_{k}, \mathbf{y}_{k})] \right. \nonumber \\
  &- \left.  \mathbf{P}_{\mathcal{X}}[\mathbf{x}^{*} - \alpha_{k}\beta_{k}G(\mathbf{x}^{*}, \mathbf{1}\bar{\mathbf{x}}^{*})] \|^{2} \right] \nonumber \\
	\leq & \mathbb{E} \left[ \| \mathbf{x}_{k} - \mathbf{x}^{*} - \alpha_{k} \beta_{k}(G(\mathbf{x}_{k}, \mathbf{y}_{k}) - G(\mathbf{x}^{*}, \mathbf{1}\bar{\mathbf{x}}^{*})) \|^{2} \right]\nonumber \\
	=& \mathbb{E} \left[ \| \mathbf{x}_{k} - \mathbf{x}^{*} \|^{2} \right] + \alpha_{k}^{2} \beta_{k}^{2} \mathbb{E} \left[ \| G(\mathbf{x}_{k}, \mathbf{y}_{k}) - G(\mathbf{x}^{*}, \mathbf{1}\bar{\mathbf{x}}^{*}) \|^{2} \right] \nonumber \\
	&- 2 \alpha_{k} \beta_{k} \mathbb{E} \left[ (\mathbf{x}_{k} - \mathbf{x}^{*})^{\top}(G(\mathbf{x}_{k}, \mathbf{y}_{k}) - G(\mathbf{x}^{*}, \mathbf{1}\bar{\mathbf{x}}^{*})) \right] \nonumber\\
	\leq & \mathbb{E} \left[ \| \mathbf{x}_{k} - \mathbf{x}^{*} \|^{2} \right] + 4C^{2}N \alpha_{k}^{2} \beta_{k}^{2}  \nonumber \\
	&- 2 \alpha_{k} \beta_{k}  \mathbb{E} [ ( \mathbf{x}_{k} - \mathbf{x}^{*})^{\top} (G(\mathbf{x}_{k}, \mathbf{y}_{k}) - G(\mathbf{x}^{*}, \mathbf{1}\bar{\mathbf{x}}^{*})) ] ,
\end{align}
where the first inequality holds from the non-expansive property of the projection operation and the last inequality is followed by Assumption 5.
For the last term of~\eqref{eq: x_bounded_1}, we have
\begin{align} \label{eq: optim_2}
	& - 2 \alpha_{k} \beta_{k} (\mathbf{x}_{k} - \mathbf{x}^{*})^{\top}(G(\mathbf{x}_{k}, \mathbf{y}_{k}) - G(\mathbf{x}^{*}, \mathbf{1}\bar{\mathbf{x}}^{*})) \nonumber \\
	=& - 2 \alpha_{k} \beta_{k} (\mathbf{x}_{k} - \mathbf{x}^{*})^{\top} ( G(\mathbf{x}_{k}, \mathbf{y}_{k}) - G(\mathbf{x}_{k}, \mathbf{1}\bar{\mathbf{x}}) \nonumber \\
	& \quad + G(\mathbf{x}_{k}, \mathbf{1}\bar{\mathbf{x}}) - G(\mathbf{x}^{*}, \mathbf{1}\bar{\mathbf{x}}^{*}) ) \nonumber \\
	=& - 2 \alpha_{k} \beta_{k} (\mathbf{x}_{k} - \mathbf{x}^{*})^{\top} ( G(\mathbf{x}_{k}, \mathbf{y}_{k}) - G(\mathbf{x}_{k}, \mathbf{1}\bar{\mathbf{x}})) \nonumber \\
	& - 2 \alpha_{k} \beta_{k} (\mathbf{x}_{k} - \mathbf{x}^{*})^{\top} (\Phi(\mathbf{x}_{k}) - \Phi(\mathbf{x}^{*}) ) \nonumber \\
	\leq & 2 L_{g} \alpha_{k} \beta_{k} \left\| \mathbf{x}_{k} - \mathbf{x}^{*} \right\| \left\| \mathbf{y}_{k} - \mathbf{1} \bar{\mathbf{x}}_{k}  \right\| - 2m\alpha_{k} \beta_{k} \left\| \mathbf{x}_{k} - \mathbf{x}^{*} \right\|^{2} \\
	\leq & \frac{1}{\nu_{2}} L_{g}^{2} \alpha_{k}^{2} \beta_{k} \left\| \mathbf{x}_{k} - \mathbf{x}^{*} \right\|^{2} + \nu_{2} \beta_{k}  \left\| \mathbf{y}_{k} - \mathbf{1} \bar{\mathbf{x}}_{k}  \right\|^{2} \nonumber ,
\end{align} 


From the first two conditions in (9), we can derive that $\lim_{k \to \infty} \alpha_{k} = 0$. Thus, there exists $k^{\prime} > 0$ such that $\alpha_{k^{\prime}} <1$ and we can only focus on the the sequence $\{\mathbf{x}_{k}\}_{k \geq k^{\prime}}$.
By combining~\eqref{eq: x_bounded_1} and~\eqref{eq: optim_2}
, we obtain the following expression:
\begin{align}
	&  \| \mathbf{x}_{k+1} - \mathbf{x}^{*}  \|^{2}  \nonumber \\
	\leq & \| \mathbf{x}_{k} - \mathbf{x}^{*} \|^{2} + 4C^{2}N \alpha_{k}^{2} \beta_{k}^{2} - 2m\alpha_{k} \beta_{k} \| \mathbf{x}_{k} - \mathbf{x}^{*} \|^{2}
\nonumber \\
	& + 2 \bar{L} \alpha_{k} \beta_{k} \| \mathbf{x}_{k} - \mathbf{x}^{*} \| \| \mathbf{y}_{k} - \mathbf{1} \bar{\mathbf{x}}_{k} \|  \nonumber \\
	\leq & \| \mathbf{x}_{k} - \mathbf{x}^{*} \|^{2} + 4C^{2}N \alpha_{k}^{2} \beta_{k}^{2} - 2m\alpha_{k} \beta_{k} \| \mathbf{x}_{k} - \mathbf{x}^{*} \|^{2}
\nonumber \\
	&+ \frac{1}{\nu} \bar{L}^{2} \alpha_{k}^{2} \beta_{k}^{2} \left\| \mathbf{x}_{k} - \mathbf{x}^{*} \right\|^{2} + \nu   \left\| \mathbf{y}_{k} - \mathbf{1} \bar{\mathbf{x}}_{k}  \right\|^{2}, \nonumber
\end{align}
where the last inequality holds from Young's inequality.
By letting $\nu = \frac{{L}_{g}^{2} \alpha_{k} \beta_{k} }{m}$, we further get
\begin{align} \label{eq: x_2}
	& \left\| \mathbf{x}_{k+1} - \mathbf{x}^{*}  \right\|^{2} \nonumber \\
	\leq & \left\| \mathbf{x}_{k} - \mathbf{x}^{*} \right\|^{2} + 4C^{2}N \alpha_{k}^{2} \beta_{k}^{2} - m\alpha_{k} \beta_{k} \left\| \mathbf{x}_{k} - \mathbf{x}^{*} \right\|^{2} \nonumber \\
	& + \frac{\bar{L}^{2} \alpha_{k} \beta_{k} }{m} \left\| \mathbf{y}_{k} - \mathbf{1} \bar{\mathbf{x}}_{k}  \right\|^{2}.
\end{align}
Taking expectation and summing both side of~\eqref{eq: x_2} from $k=k^{\prime}$ to $T$, we have the following relationship:
\begin{align} \label{eq: final}
	& \sum_{k=k^{\prime}}^{T} m \alpha_{k} \beta_{k} \mathbb{E} \left[ \| \mathbf{x}_{k} - \mathbf{x}^{*} \|^{2} \right] \nonumber \\
	\leq & \| \mathbf{x}_{0} - \mathbf{x}^{*} \|^{2} - \mathbb{E} \left[ \| \mathbf{x}_{T+1} - \mathbf{x}^{*} \|^{2} \right] + 4C^{2}N \sum_{k=k^{\prime}}^{T} \alpha_{k}^{2} \beta_{k}^{2} \nonumber \\
	& + \frac{\bar{L}^{2}}{m} \sum_{k=k^{\prime}}^{T} \alpha_{k}\beta_{k} \mathbb{E} \left[ \| \mathbf{y}_{k} - \mathbf{1} \bar{\mathbf{x}}_{k} \|^{2} \right].
\end{align}
When $T \to \infty$, $4C^{2}N \sum_{k=k^{\prime}}^{T} \alpha_{k}^{2} \beta_{k}^{2}$ is bounded and $\left\| \mathbf{x}_{T+1} - \mathbf{x}^{*} \right\|^{2}$ is bounded due to the bounded constraint.
Moreover, $ \sum_{k=k^{\prime}}^{\infty} \alpha_{k}\beta_{k} \mathbb{E} \left[ \| \mathbf{y}_{k} - \mathbf{1}\bar{\mathbf{x}}_{k} \|^{2}  \right]
	\leq  \sum_{k=k^{\prime}}^{\infty} \beta_{k} \mathbb{E} \left[ \| \mathbf{y}_{k} - \mathbf{1}\bar{\mathbf{x}}_{k} \|^{2}  \right] < \infty $ by Lemma 1. 
	Therefore, the right-hand side of~\eqref{eq: final} is always bounded when $T \to \infty$. 
With $\sum_{k=k^{\prime}}^{\infty} \alpha_{k} \beta_{k} =\infty $, we can conclude that $ \mathbb{E} \left[ \|\mathbf{x}_{k} -  \mathbf{x}^{*}\|^{2} \right]$ converges to zero.

\subsection{Proof of Corollary 1} \label{app: cor1}

If $\omega_{1}$ and $\omega_{2}$ satisfy the conditions in Corollary 1, then $\alpha_{k}$ and $\beta_{k}$ will satisfy the conditions in Theorem 1 to ensure that $\mathbb{E} \left[ \| \mathbf{x}_{k} - \mathbf{x}^{*} \|^{2} \right]$ converges to zero.
According to~\eqref{eq: y_bound_1}, we have the following relationship when $k$ is large enough:
\begin{equation*}
	\mathbb{E}\left[ \| \mathbf{y}_{k+1} \|^{2} \right] \leq (1-\beta_{k} \lambda_{2})\mathbb{E}\left[ \| \mathbf{y}_{k} \|^{2} \right] + \eta_{k},
\end{equation*}
where $\eta_{k} = \frac{c_{4}}{(c_{2}k+1)^{\omega_{3}}}$ for some $c_{4} > 0$ and $\omega_{3} = \min\{ 2\omega_{1} + \omega_{2}, 2 \omega_{2} \}$. Therefore, we have 
\begin{equation} \label{eq: y_rate}
	\lim_{k \to \infty}(k+1)^{\omega_{4}} \mathbb{E}\left[ \| \mathbf{y}_{k} \|^{2} \right]  = 0
\end{equation}
based on Lemma~\ref{lem: rate}, where $0 \leq \omega_{4} < \omega_{3}-\omega_{2}$.

Based on~\eqref{eq: final} and~\eqref{eq: y-x}, we have
\begin{align} \label{eq: lhs}
	& \frac{\sum_{k=0}^{T} m \alpha_{k} \beta_{k} \mathbb{E} \left[ \| \mathbf{x}_{k} - \mathbf{x}^{*} \|^{2} \right]}{\sum_{k=0}^{T} \alpha_{k} \beta_{k}}  \\
	\leq & \frac{\| \mathbf{x}_{0} - \mathbf{x}^{*} \|^{2} - \mathbb{E} \left[ \| \mathbf{x}_{T+1} - \mathbf{x}^{*} \|^{2} \right]}{\sum_{k=0}^{T} \alpha_{k} \beta_{k}} 
	+ \frac{4C^{2}N \sum_{k=0}^{T} \alpha_{k}^{2} \beta_{k}^{2}}{\sum_{k=0}^{T} \alpha_{k} \beta_{k}} \nonumber \\
	& \label{eq: final_rate} + \frac{\frac{\bar{L}^{2}}{m} \sum_{k=0}^{T} \alpha_{k}\beta_{k} \mathbb{E} \left[ \| \mathbf{y}_{k} \|^{2} \right]}{\sum_{k=0}^{T} \alpha_{k} \beta_{k}}.
\end{align}
Equation~\eqref{eq: y_rate} indicates that $\mathbb{E}\left[ \| \mathbf{y}_{k} \|^{2} \right]$ is in the same order of $\alpha_{k}^{2}$ or $\beta_{k}$, and thus, the third term of~\eqref{eq: final_rate} converges to zero with a rate $O\left( \frac{1}{(T+1)^{\omega}} \right)$, where $\omega = \min\{ 2 \omega_{1}, \omega_{2} \}$. 
Moreover, the second term of~\eqref{eq: final_rate} converges to zero with a rate $O\left( \frac{1}{(T+1)^{\omega_{1} + \omega_{2}}} \right)$. Since $\omega_{1} + \omega_{2} > \omega$,~\eqref{eq: lhs} will decay with a rate $\omega$. 

\subsection{Proof of Theorem 2} \label{app: thm2}

It can be easily inferred that $\mathbb{E}[\mathcal{C}(x)] = x$ and $\mathbb{E}\left[ \| \mathcal{C}(x) - x \|^{2} \right] \leq \frac{\theta^{2}}{4}$, fulfilling the requirements of Assumption 6. Furthermore, $\sum_{k=0}^{\infty} \frac{c_{4}}{c_{5}k+1} = \infty$ satisfies the first condition in (9).
Thus, this stochastic compressor enables convergence accuracy when the step sizes satisfy other conditions in (9).

From Algorithm 1, it can be seen that given initial state $\{ \mathbf{x}_{0}, \mathbf{y}_{0} \}$, the network topology $W$ and the function set $\mathcal{F}$,
 the observation sequence $\{ \mathcal{O}_{k} \}_{k\geq0}$ is uniquely determined by the compression scheme.
For any pair of adjacent objective function sets $\{ \mathcal{F}\}$ and $\{\mathcal{F}^{\prime}\}$, the eavesdropper is assumed to know the initial states of the algorithm. Thus, $\mathbf{x}_{0} = \mathbf{x}_{0}^{\prime}$ and $\mathbf{y}_{0} = \mathbf{y}_{0}^{\prime}$ based on the same observation. 
Furthermore, the two function sets generate the same outputs, i.e., $\mathcal{C}(y_{i,k})$ and $\mathcal{C}(y_{i,k}^{\prime})$ for all $i \in \mathcal{N}$. The compression errors are independently and identically distributed. Similarly to the nosie-based privacy analysis~\cite{ye2021differentially}, we can conclude that $x_{i,k} = x_{i,k}^{\prime}$ and $y_{i,k} = y_{i,k}^{\prime}$ for $i \neq i_{0}$ and for all non-negative $k$.
  According to (6), we have the following relation for $i_{0}$:
 \begin{align*}
 	y_{i_{0},k+1} =& y_{i_{0},k} + \beta_{k} \sum_{j \in \mathcal{N}_{i}} w_{i,j}(\mathcal{C}(y_{j,k}) - \mathcal{C}(y_{i_{0},k})) - \alpha_{k} \beta_{k} g_{i_{0},k}, \\
 	y_{i_{0},k+1}^{\prime} =& y_{i_{0},k}^{\prime} + \beta_{k} \sum_{j \in \mathcal{N}_{i}} w_{i,j}(\mathcal{C}(y_{j,k}^{\prime}) - \mathcal{C}(y_{i_{0},k}^{\prime})) - \alpha_{k} \beta_{k} g_{i_{0},k}^{\prime}.
 \end{align*}
 Therefore, we have 
 \begin{equation*}
 	\Delta y_{i_{0},k+1} = y_{i_{0},k+1} - y_{i_{0},k+1}^{\prime} = \Delta y_{i_{0},k} - \alpha_{k} \beta_{k}\Delta g_{i_{0},k},
 \end{equation*}
 where $\Delta g_{i_{0},k} = g_{i_{0},k} - g_{i_{0},k}^{\prime}$. Since $\Delta y_{i_{0},0} = 0$, there is
 \begin{align} \label{eq: delta_y}
 	\| \Delta y_{i_{0},k}\| 
 	\leq & \sum_{s=0}^{k-1} \alpha_{s} \beta_{s} \| \Delta g_{i_{0}, s} \| \nonumber \\
 	\leq & 2C \sum_{s=0}^{k-1} \alpha_{s} \beta_{s} \nonumber \\
 	\leq & \frac{2Cc_{4}}{c_{5}} \ln(c_{5}k+1).
 \end{align}
 
Without generality, suppose the attacker's observation at $k$ for $y_{i(j)}, y_{i(j)}^{\prime}$ is $l\theta$.
Similar to the proof of Theorem 3 in Wang and Ba{\c{s}}ar~\cite{wang2022quantization}, we can obtain that $\delta_{k} = \mathbb{P}[\mathcal{C}(y_{i(j)}) = l \theta | y_{i}] - \mathbb{P}[\mathcal{C}(y_{i(j)}^{\prime}) = l \theta | y_{i}^{\prime}] \leq  \frac{\| \Delta y_{i_{0},k} \|_{1}}{\theta}$ depends on $\| \Delta y_{i_{0},k} \|_{1}$. 
	Due to the same observation, there is $| \Delta y_{i_{0}(j),k} | \leq 2 \theta$ from Fig. 1 and $ \| \Delta y_{i_{0},k} \|_{1}  = \sum_{j=1}^{n} | \Delta y_{i_{0}(j),k} | \leq 2n\theta$. 
	Additionally, according to~\eqref{eq: delta_y}, we have $\| \Delta y_{i_{0},k} \|_{1} \leq \sqrt{n} \| \Delta y_{i_{0},k} \| \leq \frac{2Cc_{4}\sqrt{n}}{c_{5}} \ln(c_{5}k+1)$. 
Moreover, it should be noted that in DP, $\delta_{k}$ should be a small parameter in $(0,1)$.
	Hence, we derive the expression of $\delta_{k}$ shown in (12).
	\begin{equation*}
		\delta_{k} \leq \frac{\| \Delta y_{i_{0},k} \|_{1}}{\theta} \leq \min \left\{ 1, \frac{2Cc_{4}\sqrt{n}}{c_{5}\theta} \ln(c_{5}k+1) \right \}.
	\end{equation*}

\bibliographystyle{unsrt}
\bibliography{HW.bib}

\end{document}